\crefname{hypothesis}{Hypothesis}{Hypotheses}
\newtheorem{example}{Example}
\newcommand{\oomit}[1]{}
\title{A Framework for Safe Probabilistic Invariance Verification of Stochastic Dynamical Systems \thanks{Submitted to the editors DATE.}}
\author{Taoran Wu\thanks{Key Laboratory of System Software (Chinese Academy of Sciences) and State Key Laboratory of Computer Science, Institute of Software, Chinese Academy of Sciences, Beijing, China 
  (\email{\{wutr,xuebai\}@ios.ac.cn}).} 
\and Yiqing Yu\thanks{Department of Information Science, School of Mathematical Sciences, Peking University, China
  (\email{yuyiqing@hsefz.cn}, \email{xbc@math.pku.edu.cn}).}
\and Bican Xia\footnotemark[3]
\and Ji Wang\thanks{State Key Laboratory of High Performance Computing and College of Computer
Science and Technology at National University of Defense Technology, China 
  (\email{wj@nudt.edu.cn}).}
\and Bai Xue\footnotemark[2]}
\begin{document}

\maketitle

\begin{abstract}
Ensuring safety through set invariance has proven to be a valuable method in various robotics and control applications. This paper introduces a comprehensive framework for the safe probabilistic invariance verification of both discrete- and continuous-time stochastic dynamical systems over an infinite time horizon. The objective is to ascertain the lower and upper bounds of liveness probabilities for a given safe set and set of initial states. The liveness probability signifies the likelihood of the system remaining within the safe set indefinitely, starting from a state in the initial set. To address this problem, we propose optimizations for verifying safe probabilistic invariance in discrete-time and continuous-time stochastic dynamical systems. These optimizations are constructed via either using the Doob’s nonnegative supermartingale inequality-based method or relaxing the equations described in \cite{xue2021reach,xue2023reach}, which can precisely characterize the probability of reaching a target set while avoiding unsafe states. Finally, we demonstrate the effectiveness of these optimizations through several examples using semi-definite programming tools.
\end{abstract}

\begin{keywords}
Stochastic Systems; Safe Probabilistic Invariance Verification; Liveness Probabilities; Lower and Upper Bounds.
\end{keywords}


\section{Introduction}
\label{sec:intro}
The rapid development of modern technology has led to an increase in intelligent autonomous systems. Ensuring the safe operation of these systems is essential, but external disturbances can cause uncertainties, making it necessary to consider their impact on system safety. Safe robust invariance is commonly used to formalize the impact of unknown perturbations and guarantee that a system will remain inside a specific safe set for all time, regardless of external disturbances. Many studies have been published on certifying safe robust invariance over the past few decades \cite{rakovic2005invariant,xue2021robust}. 

However, the inclusion of all disturbances in safe robust invariance verification often leads to excessive conservatism, making it unable to verify many problems in practical applications. On the other hand, many systems possess additional information regarding external disturbances, such as probability distributions \cite{franzle2008stochastic}. In such cases, safe probabilistic invariance provides a valuable complement to safe robust invariance by ensuring that a system will remain within a specified safe set with a certain probability \cite{kushner1967stochastic}. This approach mitigates the inherent conservatism in safe robust invariance verification by allowing for probabilistic violations and has garnered increasing attention in recent literature \cite{lavaei2022automated}.

This paper proposes a framework for safe probabilistic invariance verification in both stochastic discrete-time and continuous-time systems. The objective of safe probabilistic invariance verification is to compute lower and upper bounds on liveness probabilities of the system remaining inside a safe set for all time, starting from a specific initial set. In this framework, several optimizations are proposed for computing these lower and upper bounds. These optimizations are constructed via the adaptation of stochastic barrier certificates in \cite{anand2022k,prajna2007framework}, which are built upon the well-established Doob's nonnegative supermartingale inequality \cite{ville1939etude}, and equations in \cite{xue2021reach,xue2023reach}, which can precisely characterize the probability of reaching a target set while avoiding unsafe states. To demonstrate and compare the performance of these optimizations, several examples are provided based on the semi-definite programming tool.

The contributions of this work are summarized below.
\begin{enumerate}
\item A framework for the safe probabilistic invariance verification in both stochastic discrete-time and continuous-time systems is proposed. The framework not only computes commonly studied lower bounds of the liveness probabilities, but also provides upper bounds. The incorporation of upper bounds enhances our understanding of the system's invariance properties and allows for a more accurate estimation of the liveness probabilities.
\item The framework incorporates several optimizations to address the safe probabilistic invariance verification. These optimizations are derived from the known Doob’s nonnegative supermartingale inequality, as well as the equations proposed in \cite{xue2021reach,xue2023reach} for characterizing the exact probability. It is demonstrated that the optimizations derived from the nonnegative Doob’s supermartingale inequality are equivalent to those derived from the equations when computing lower bounds. However, for computing upper bounds, the optimizations from \cite{xue2021reach,xue2023reach} are more effective.
\item The performance and effectiveness of the proposed optimizations are extensively demonstrated through numerical examples in both discrete- and continuous-time systems, utilizing the semi-definite programming tool.
\end{enumerate}

This paper extends our previous conference work \cite{yu2023safe}, which focused solely on the safe probabilistic invariance verification in stochastic discrete-time systems. In contrast to \cite{yu2023safe}, this work expands the method to encompass continuous-time systems. Furthermore, we demonstrate that the optimizations derived from the nonnegative Doob’s supermartingale inequality are equivalent to those derived from the equations proposed by \cite{xue2021reach,xue2023reach} when computing lower bounds.



\section{Related Work}

This section specifically focuses on works closely related to the topic of this paper, despite a significant body of research on the verification of stochastic (hybrid) systems. For readers interested in a broader survey of the field, we suggest referring to \cite{lavaei2022automated}. 

\subsection{Stochastic Discrete-time Systems}

The problem addressed in this work is related to the computation of probabilistic invariant sets. These sets define a range of states where a system, starting from any point within the set, must remain within a specified region of interest with a certain probability. By computing probabilistic invariant sets, we can demonstrate that the safe invariance property holds for an initial set if it is a subset of the computed probabilistic invariant sets. Previous works, such as \cite{kofman2012probabilistic}, \cite{kofman2016continuous}, and \cite{hewing2018correspondence}, have approximated polyhedral probabilistic invariant sets using Chebyshev's inequality for linear systems with Gaussian noise. However, these methods are limited to computing lower bounds of the liveness probabilities and cannot be applied to compute upper bounds. Furthermore, the above mentioned works have primarily focused on linear systems, whereas this paper considers nonlinear systems.

On the other hand, probabilistic invariance can be evaluated by examining its dual, probabilistic reachability. \cite{abate2008probabilistic,abate2010approximate} investigated the finite-time probabilistic invariance problem for discrete-time stochastic (hybrid) systems via reachability analysis. Meanwhile, \cite{tkachev2011infinite,tkachev2014characterization} studied the infinite-time probabilistic invariance by defining it as a finite-time reach-avoid property in combination with infinite-time invariance around absorbing sets over the state space and provided a lower bound for the infinite-time probabilistic invariance. Recently, \cite{gao2020computing} proposed an algorithm to compute infinite-horizon probabilistic controlled invariant sets which provide a lower bound on the liveness probabilities based on the dynamic program in \cite{abate2008probabilistic}. These sets are computed using  the stochastic backward reachable set from a robust invariant set. One obstacle faced by the aforementioned methods is the computation of absorbing or robust invariant sets, when they do exist.

Another two closely related works to the present one are \cite{anand2022k}, which studies the safety and reachability verification based on non-negative supermartingale-based barrier certificates, and \cite{xue2021reach}, which formulates a set of equations being able to characterize the exact probability of reaching a specified target set while avoiding unsafe states. With an assumption that the evolution space $\mathcal{X}$ is a robust invariant (i.e., $\bm{f}(\bm{x},\bm{d}): \mathcal{X}\times \mathcal{D}\rightarrow \mathcal{X}$), barrier certificates inspired by the ones in \cite{prajna2007framework} and \cite{prajna2007convex} were formulated for safety and reachability verification of stochastic discrete-time systems in \cite{anand2022k}. The first set of optimizations for discrete-time systems in this paper was adapted from them. The assumption that the evolution space $\mathcal{X}$ (i.e., the safe set in this paper) is a robust invariant is abandoned in our method. Instead, an auxiliary switched system with a robust invariant set is borrowed to construct our constraints for addressing the safe probabilistic invariance verification problem. The second set of optimizations proposed in this paper is inspired by the results in \cite{xue2021reach}. A new equation, which is able to characterize the exact liveness probability of staying with the safe set $\mathcal{X}$, is formulated, and optimizations for addressing the safe probabilistic invariance verification problem are constructed via relaxing this equation.  

\subsection{Stochastic Continuous-time Systems}

Based on the Doob's nonnegative supermartingale inequality, barrier certificates were first proposed in \cite{prajna2007framework} to address the safety verification problem in stochastic continuous-time systems. In \cite{prajna2007framework}, the safety verification problem aims to certify the lower bound of the probability that, starting from a given initial set of states, the system never enters a set of unsafe states when it evolves inside a specified state-constrained set. Subsequently, various barrier certificates were developed, each offering different levels of expressiveness and capabilities in handling diverse problems, e.g., inspired by the results in \cite{kushner1967stochastic} and the Doob's nonnegative supermartingale inequality, \cite{steinhardt2012finite,santoyo2019verification,santoyo2021barrier} proposed barrier-like conditions to tackle the finite-time safety verification problem; temporal logic verification of stochastic systems via barrier certificates was  proposed in \cite{jagtap2018temporal}, which is to find a lower bound on the probability that a
complex temporal logic property is satisfied by finite trajectories of the system. Recently, controlled versions, known as control barrier functions, were explored in \cite{wang2021safety,nishimura2022control}. These controlled versions aim to facilitate the synthesis of controllers that guarantee the invariance of a specified safe set over finite or infinite time horizons, with a probability greater than a specified threshold. On the other hand,  an alternate approach, distinct from the aforementioned ones based on the Doob's nonnegative supermartingale inequality, was presented in \cite{xue2023reach}. It proposes a system of equations capable of accurately describing the probability of the system eventually reaching a target set while remaining within a specified safe set before the first target hitting time. By relaxing this system of equations, it becomes possible to establish both a barrier-like condition for lower-bounding the probability and a barrier-like condition for upper-bounding the probability. Recently, this method was extended to lower- and upper-bound finite-time reachability probabilities in \cite{xue2023new}. In this paper, we adapt the equation from \cite{xue2023reach} to precisely characterize the liveness probability and obtain a set of barrier-like conditions by relaxing the adapted equation for both lower- and upper-bounding the liveness probabilities of continuous-time systems.



This paper is structured as follows. In Section \ref{sec:pre}, we formalize stochastic discrete-time systems and associated safe probabilistic invariance verification problems of interest. In Section \ref{SIV} we present two sets of optimizations designed to address  the safe probabilistic invariance verification problem for stochastic discrete-time systems. In a parallel manner to Section \ref{sec:pre}, we extend our formalization to stochastic continuous-time systems in Section \ref{sec:pre_continuous}. Similarly, Section \ref{sec:method_continuous} is dedicated to presenting two sets of optimizations specifically designed for safe probabilistic invariance verification of stochastic continuous-time systems. Examples demonstrating the proposed optimizations are provided in Section \ref{sec:example}, and finally, this paper is concluded in Section \ref{sec:conclusion}. 

Throughout this paper, we refer to several basic notions. For example, $\mathbb{N}$ is the set of nonnegative integers, while $\mathbb{N}_{\leq k}$ is the set of nonnegative integers that are less than or equal to $k$; $\mathbb{R}_{\geq 0}$ is the set of non-negative real numbers. Additionally, we use the notation $\Delta^c$, $\partial \Delta$ and $\overline{\Delta}$ to represent the complement, boundary and closure of a set $\Delta$, respectively; $\mathcal{C}^2(\Delta)$ denotes a set of twice continuously differentiable functions over $\Delta$. Furthermore, $\mathbb{R}[\cdot]$ denotes the ring of polynomials in variables given by the argument; $\sum[\bm{x}]$ denotes the set of sum-of-squares polynomials over variables $\bm{x}$, i.e., 
$\sum[\bm{x}]=\{p\in \mathbb{R}[\bm{x}]\mid p=\sum_{i=1}^k q^2_i(\bm{x}), q_i(\bm{x})\in \mathbb{R}[\bm{x}],i=1,\ldots,k\}$. Finally, we use the indicator function $1_A(\bm{x})$ to denote whether or not $\bm{x}$ is an element of a set $A$. Specifically, if $\bm{x}\in A$, then $1_A(\bm{x}) = 1$, and if $\bm{x}\notin A$, then $1_A(\bm{x}) = 0$.

\section{Preliminaries on Discrete-time systems}
\label{sec:pre}
We begin by introducing the concept of discrete-time systems that are subject to stochastic disturbances, as well as the problem of verifying safe probabilistic invariance. 

\subsection{Problem Statement}


In this section we are examining stochastic discrete-time systems that are described by stochastic difference equations of the form:
\begin{equation}
\label{system}
\begin{cases}
\bm{x}(l+1)=\bm{f}(\bm{x}(l),\bm{d}(l)), \quad\forall l\in \mathbb{N}\\
\bm{x}(0)=\bm{x}_0 \in \mathbb{R}^n
\end{cases}.
\end{equation}
Here, $\bm{x}(\cdot)\colon \mathbb{N} \rightarrow \mathbb{R}^n$ represents the states, and $\bm{d}(\cdot)\colon \mathbb{N}\rightarrow \mathcal{D}$ with $\mathcal{D} \subseteq \mathbb{R}^m$ represents the stochastic disturbances. The random vectors $\bm{d}(0), \bm{d}(1),\ldots$ are independent and identically distributed (i.i.d), and take values in $\mathcal{D}$ with the probability distribution:
\[{\rm Prob}(\bm{d}(l)\in B)=\mathbb{P}(B), \quad\forall l\in \mathbb{N}, \quad\forall B\subseteq \mathcal{D}.\]
In addition, $\mathbb{E}[\cdot]$ is the expectation induced by $\mathbb{P}$.


To prepare for defining the trajectory of system \eqref{system}, we first need to define a disturbance signal. 
\begin{definition}
A disturbance signal $\pi$ is an ordered sequence $\{\bm{d}(i),i\in \mathbb{N}\}$, where $\bm{d}(\cdot)\colon \mathbb{N}\rightarrow \mathcal{D}$.
\end{definition}

The disturbance signal $\pi$ is a stochastic process defined on the canonical sample space $\Omega=\mathcal{D}^{\infty}$ with the probability measure $\mathbb{P}^{\infty}$, and is denoted by $\{\bm{d}(i),i\in \mathbb{N}\}$. We use $\mathbb{E}^{\infty}[\cdot]$ to represent an expectation with respect to the probability measure $\mathbb{P}^{\infty}$.

Given a disturbance signal $\pi$ and an initial state $\bm{x}_0\in \mathbb{R}^n$, a unique trajectory $\bm{\phi}_{\pi}^{\bm{x}_0}(\cdot)\colon\mathbb{N}\rightarrow \mathbb{R}^n$ is induced with $\bm{\phi}_{\pi}^{\bm{x}_0}(0)=\bm{x}_0$. Specifically, we have $\bm{\phi}_{\pi}^{\bm{x}_0}(l+1)=\bm{f}(\bm{\phi}_{\pi}^{\bm{x}_0}(l),\bm{d}(l))$ for all $l\in \mathbb{N}$.


Given a safe set $\mathcal{X}$ and an initial set $\mathcal{X}_0$, where $\mathcal{X}_0 \subseteq \mathcal{X}$, the safe probabilistic invariance verification problem is to determine lower and upper bounds of liveness probabilities of remaining within the safe set $\mathcal{X}$ for system \eqref{system}, starting from $\mathcal{X}_0$. 

\begin{definition}
\label{ravoid}
The safe probabilistic invariance verification for system \eqref{system} is to compute  lower and upper bounds, denoted by $\epsilon_1\in [0,1]$ and $\epsilon_2\in [0,1]$ respectively, for the liveness probabilities that the system, starting from  $\mathcal{X}_0$, will remain inside the safe set $\mathcal{X}$ for all time, i.e., to compute $\epsilon_1$ and $\epsilon_2$ such that  
\begin{equation}
\label{verification}
\epsilon_1\leq \mathbb{P}^{\infty}(
\forall k\in \mathbb{N}. \bm{\phi}_{\pi}^{\bm{x}_0}(k)\in \mathcal{X})\leq \epsilon_2, \forall \bm{x}_0\in \mathcal{X}_0.
\end{equation}
\end{definition}


\begin{remark}
If the set $\mathcal{X}^{c}$ represents the desired (or, more comfortable) states, then $\epsilon_2$ can be used as an upper bound for the probabilities of the system \eqref{system} getting trapped within $\mathcal{X}$. This means that if we can ensure that all the liveness probabilities are below $\epsilon_2$, we can be confident that the system will eventually reach the desired states with high probability.
\end{remark}

We will focus on addressing the safe probabilistic invariance verification problem defined in Definition \ref{ravoid}.

\subsection{Reachability Probability Characterization in \cite{xue2021reach}}
In this subsection, we will recall an equation that was derived for probabilistic reach-avoid analysis of stochastic discrete-time systems. The equation's bounded solution is equivalent to the precise probability of the system entering a specified target set within a finite time while remaining inside a given safe set before the first target is reached. We will adopt this equation to address the safe invariance verification problem outlined in Definition \ref{ravoid}.

\begin{proposition}[Theorem 1, \cite{xue2021reach}]
\label{theorem_reach}
 Given a safe set $\mathcal{X}$ and a target set $\mathcal{X}_r$, where $\mathcal{X}_r\subseteq \mathcal{X}$, if there exist bounded functions $v(\bm{x})\colon \widehat{\mathcal{X}}\rightarrow \mathbb{R}$ and $w(\bm{x})\colon \widehat{\mathcal{X}}\rightarrow \mathbb{R}$ such that for $\bm{x}\in \widehat{\mathcal{X}}$, 
   \begin{equation}
   \label{reach_equation}
   \begin{cases}
       v(\bm{x})=\mathbb{E}^{\infty}[v(\widehat{\bm{\phi}}^{\bm{x}}_{\pi}(1))]\\
       v(\bm{x})=1_{\mathcal{X}_r}(\bm{x})+\mathbb{E}^{\infty}[w(\widehat{\bm{\phi}}^{\bm{x}}_{\pi}(1))]-w(\bm{x})
   \end{cases}
   \end{equation}
   then for $\bm{x}_0\in \mathcal{X}$,
   \begin{equation*}
   \begin{split}
   & \mathbb{P}^{\infty}\big(\exists k\in \mathbb{N}. \bm{\phi}^{\bm{x}_0}_{\pi}(k) \in \mathcal{X}_r\wedge \forall l\in \mathbb{N}_{\leq k}. \bm{\phi}^{\bm{x}_0}_{\pi}(l)\in \mathcal{X} \big)\\
   &=\mathbb{P}^{\infty}\big(\exists k\in \mathbb{N}. \widehat{\bm{\phi}}^{\bm{x}_0}_{\pi}(k) \in \mathcal{X}_r \big)=\lim_{i\rightarrow \infty}\frac{\mathbb{E}^{\infty}[\sum_{j=0}^{i-1} 1_{\widehat{\mathcal{X}}\setminus \mathcal{X}}(\widehat{\bm{\phi}}^{\bm{x}_0}_{\pi}(j))]}{i}=v(\bm{x}_0)
   \end{split},
   \end{equation*}where 
$\widehat{\bm{\phi}}^{\bm{x}_0}_{\pi}(\cdot)\colon \mathbb{N}\rightarrow \mathbb{R}^n$ is the trajectory to the system 
\begin{equation*}
\begin{cases}
\bm{x}(j+1)=1_{\mathcal{X}\setminus \mathcal{X}_r}(\bm{x}(j))\cdot \bm{f}(\bm{x}(j),\bm{d}(j))\\
~~~~~~~~~~+1_{\mathcal{X}_r}(\bm{x}(j))\cdot \bm{x}(j)+1_{\widehat{\mathcal{X}}\setminus \mathcal{X}}(\bm{x}(j))\cdot \bm{x}(j), \forall j\in \mathbb{N}\\
\bm{x}(0)=\bm{x}_0
\end{cases},
\end{equation*}
and  $\widehat{\mathcal{X}}$ is a set satisfying $\widehat{\mathcal{X}}\supset \{\bm{x}\in \mathbb{R}^n\mid \bm{x}=\bm{f}(\bm{x}_0,\bm{d}), \bm{x}_0\in \mathcal{X}, \bm{d}\in \mathcal{D}\}\cup \mathcal{X}$.
\end{proposition}

A sufficient condition for certifying lower bounds of the probabilities $\mathbb{P}^{\infty}\big(\exists k\in \mathbb{N}. \bm{\phi}^{\bm{x}_0}_{\pi}(k) \in \mathcal{X}_r\wedge \forall l\in \mathbb{N}_{\leq k}. \bm{\phi}^{\bm{x}_0}_{\pi}(l)\in \mathcal{X}\big)$ for $\bm{x}_0\in \mathcal{X}_0$ can be derived via relaxing \eqref{reach_equation}. It is obtained by adding a constraint $v(\bm{x}) \geq \epsilon_1, \forall \bm{x}\in \mathcal{X}_0$ into the ones in Corollary 2 in \cite{xue2021reach}.

\begin{corollary}
\label{reachability}
   Given a safe set $\mathcal{X}$, a target set $\mathcal{X}_r$ and an initial set $\mathcal{X}_0$, where $\mathcal{X}_0,\mathcal{X}_r\subseteq \mathcal{X}$, if there exist bounded functions $v(\bm{x})\colon\widehat{\mathcal{X}}\rightarrow \mathbb{R}$ and $w(\bm{x})\colon \widehat{\mathcal{X}}\rightarrow \mathbb{R}$ such that 
   \begin{equation*}
   \label{rb}
   \begin{cases}
       v(\bm{x}) \geq \epsilon_1 & \forall \bm{x}\in \mathcal{X}_0\\
       v(\bm{x})\leq \mathbb{E}^{\infty}[v(\widehat{\bm{\phi}}^{\bm{x}}_{\pi}(1))] & \forall \bm{x}\in \widehat{\mathcal{X}}\\
       v(\bm{x})\leq 1_{\mathcal{X}_r}(\bm{x})+\mathbb{E}^{\infty}[w(\widehat{\bm{\phi}}^{\bm{x}}_{\pi}(1))]-w(\bm{x}) & \forall \bm{x}\in \widehat{\mathcal{X}}
   \end{cases},
   \end{equation*}
   which is equivalent to 
      \begin{equation}
   \label{rb1}
   \begin{cases}
       v(\bm{x})\geq \epsilon_1 &\forall \bm{x}\in \mathcal{X}_0\\ 
       v(\bm{x})\leq \mathbb{E}^{\infty}[v(\bm{\phi}^{\bm{x}}_{\pi}(1))]& \forall \bm{x}\in \mathcal{X}\setminus \mathcal{X}_r\\
       v(\bm{x})\leq \mathbb{E}^{\infty}[w(\bm{\phi}^{\bm{x}}_{\pi}(1))]-w(\bm{x}) & \forall \bm{x}\in \mathcal{X}\setminus \mathcal{X}_r\\
       v(\bm{x})\leq 1 & \forall \bm{x}\in \mathcal{X}_r\\
       v(\bm{x}) \leq 0 & \forall \bm{x} \in \widehat{\mathcal{X}} \setminus \mathcal{X}
   \end{cases},
   \end{equation}
   then $\mathbb{P}^{\infty}\big(\exists k\in \mathbb{N}. \bm{\phi}^{\bm{x}_0}_{\pi}(k) \in \mathcal{X}_r\wedge \forall l\in \mathbb{N}_{\leq k}. \bm{\phi}^{\bm{x}_0}_{\pi}(l)\in \mathcal{X} \big)=\mathbb{P}^{\infty}\big(\exists k\in \mathbb{N}. \widehat{\bm{\phi}}^{\bm{x}_0}_{\pi}(k) \in \mathcal{X}_r  \big)\geq \epsilon_1$ for $\bm{x}_0\in \mathcal{X}_0$. 
\end{corollary}



\section{Safe Probabilistic Invariance Verification for Discrete-time Systems}
\label{SIV}
This section presents two sets of optimizations to addressing the safe probabilistic invariance verification problem in Definition \ref{ravoid}. The first set of optimizations is adapted from stochastic barrier certificates for safety and reachability verification in \cite{anand2022k}. The second set of optimizations is inspired by Proposition \ref{theorem_reach} along with Corollary \ref{reachability}.

Similar to \cite{xue2021reach}, in constructing our optimizations we need an auxiliary system as follows:
\begin{equation}
\label{new_system1}
\begin{cases}
\bm{x}(j+1)=\widehat{\bm{f}}(\bm{x}(j),\bm{d}(j))& \forall j\in \mathbb{N}\\
\bm{x}(0)=\bm{x}_0
\end{cases},
\end{equation}
where $\widehat{\bm{f}}(\bm{x},\bm{d})=1_{\mathcal{X}}(\bm{x})\cdot \bm{f}(\bm{x},\bm{d})+1_{\widehat{\mathcal{X}}\setminus \mathcal{X}}(\bm{x})\cdot \bm{x}$ and $\widehat{\mathcal{X}}$ is a set containing the union of the set $\mathcal{X}$ and all reachable states starting from $\mathcal{X}$ within one step, i.e., 
\begin{equation}
\label{sets}
\widehat{\mathcal{X}}\supset \{\bm{x}\in \mathbb{R}^n\mid \bm{x}=\bm{f}(\bm{x}_0,\bm{d}), \bm{x}_0\in \mathcal{X}, \bm{d}\in \mathcal{D}\}\cup \mathcal{X}.
\end{equation}

Given a disturbance signal $\pi$, we define the trajectory to system \eqref{new_system1} as $\widehat{\bm{\phi}}^{\bm{x}_0}_{\pi}(\cdot)\colon $ $\mathbb{N}\rightarrow \mathbb{R}^n$, where $\widehat{\bm{\phi}}^{\bm{x}_0}_{\pi}(0)=\bm{x}_0$. It is easy to observe that $\widehat{\mathcal{X}}$ is a robust invariant of system \eqref{new_system1} according to  $\widehat{\bm{f}}(\bm{x},\bm{d}) \in \widehat{\mathcal{X}}, \forall (\bm{x},\bm{d}) \in \widehat{\mathcal{X}}\times \mathcal{D}$.

Also, since $\widehat{\bm{\phi}}_{\pi}^{\bm{x}_0}(1)=\bm{\phi}_{\pi}^{\bm{x}_0}(1)$ for $\bm{x}_0\in \mathcal{X}$, we have that 
\begin{equation*}
\begin{split}
    &\mathbb{P}^{\infty}(\exists k\in \mathbb{N}. \bm{\phi}^{\bm{x}_0}_{\pi}(k) \in \widehat{\mathcal{X}}\setminus \mathcal{X} \wedge \forall i\in \mathbb{N}_{\leq k-1}. \bm{\phi}^{\bm{x}_0}_{\pi}(i) \in \mathcal{X})\\
    =&\mathbb{P}^{\infty}(\exists k\in \mathbb{N}. \widehat{\bm{\phi}}^{\bm{x}_0}_{\pi}(k) \in \widehat{\mathcal{X}}\setminus \mathcal{X} \wedge \forall i\in \mathbb{N}_{\leq k-1}. \widehat{\bm{\phi}}^{\bm{x}_0}_{\pi}(i) \in \mathcal{X})
    \end{split}
\end{equation*}  and \[\mathbb{P}^{\infty}(\forall k\in \mathbb{N}. \bm{\phi}^{\bm{x}_0}_{\pi}(k) \in  \mathcal{X})=\mathbb{P}^{\infty}(\forall k\in \mathbb{N}. \widehat{\bm{\phi}}^{\bm{x}_0}_{\pi}(k) \in  \mathcal{X}).\]


Given a disturbance signal $\pi$ and an initial state $\bm{x}_0\in \mathcal{X}$, the resulting trajectory $\widehat{\bm{\phi}}^{\bm{x}_0}_{\pi}(\cdot)\colon \mathbb{N}\rightarrow \mathbb{R}^n$ either enters the unsafe set $\widehat{\mathcal{X}}\setminus \mathcal{X}$ in finite time (i.e., $\exists k\in \mathbb{N}. \widehat{\bm{\phi}}^{\bm{x}_0}_{\pi}(k) \in \widehat{\mathcal{X}}\setminus \mathcal{X} \wedge \forall i\in \mathbb{N}_{\leq k-1}. \widehat{\bm{\phi}}^{\bm{x}_0}_{\pi}(i) \in \mathcal{X}$ ) or stays inside the safe set $\mathcal{X}$ always (i.e., $\forall k\in \mathbb{N}. \widehat{\bm{\phi}}^{\bm{x}_0}_{\pi}(k) \in  \mathcal{X}$). Thus, $\mathbb{P}^{\infty}(\exists k\in \mathbb{N}. \widehat{\bm{\phi}}^{\bm{x}_0}_{\pi}(k) \in \widehat{\mathcal{X}}\setminus \mathcal{X} \wedge \forall i\in \mathbb{N}_{\leq k-1}. \widehat{\bm{\phi}}^{\bm{x}_0}_{\pi}(i) \in \mathcal{X})+\mathbb{P}^{\infty}(\forall k\in \mathbb{N}. \widehat{\bm{\phi}}^{\bm{x}_0}_{\pi}(k) \in  \mathcal{X})=1$. Therefore, if the upper and lower bounds of the probability $\mathbb{P}^{\infty}(\exists k\in \mathbb{N}. \widehat{\bm{\phi}}^{\bm{x}_0}_{\pi}(k) \in \widehat{\mathcal{X}}\setminus \mathcal{X} \wedge \forall i\in \mathbb{N}_{\leq k-1}. \widehat{\bm{\phi}}^{\bm{x}_0}_{\pi}(i) \in \mathcal{X})$ are gained, one can obtain the lower and upper bounds of the liveness probability of staying inside the safe set $\mathcal{X}$.



\subsection{Doob's Nonnegative Supermartingale Inequality Based Invariance Verification}

In this subsection we propose optimizations for certifying lower and upper bounds of the liveness probabilities by adapting barrier certificates in \cite{anand2022k} for safety and reachability verification, which are respectively built upon the well established Doob's nonnegative supermartingale inequality \cite{ville1939etude}.  

Proposition \ref{direct_lower} provides a straightforward sufficient condition for lower bounds on the liveness probabilitiess, as demonstrated in Theorem 5 of \cite{anand2022k}.
\begin{proposition}
\label{direct_lower}
    Under the assumption that $\Omega\subseteq \mathbb{R}^n$ is a robust invariant set for system \eqref{system}, i.e., $\bm{f}(\bm{x},\bm{d})\colon \Omega\times \mathcal{D}\rightarrow \Omega$, and $\mathcal{X}\subseteq \Omega$, if there exists $v(\bm{x})\colon \Omega \rightarrow \mathbb{R}_{\geq 0}$ such that 
    \begin{equation}
    \label{direct_lower_c}
        \begin{cases}
            v(\bm{x})\leq 1-\epsilon_1 & \forall \bm{x}\in \mathcal{X}_0\\
            v(\bm{x})\geq 1 & \forall \bm{x}\in \mathcal{X}_{unsafe}(=\Omega\setminus \mathcal{X})\\
            \mathbb{E}^{\infty}[v(\bm{\phi}_{\pi}^{\bm{x}}(1))]-v(\bm{x})\leq 0 & \forall \bm{x}\in \Omega
        \end{cases},
    \end{equation}
 then $\mathbb{P}^{\infty}\big(\exists k\in \mathbb{N}. \bm{\phi}^{\bm{x}_0}_{\pi}(k) \in \mathcal{X}_{unsafe}\big)\leq 1-\epsilon_1$, $\forall \bm{x}_0\in \mathcal{X}_0$. Thus, $\mathbb{P}^{\infty}\big(\forall k\in \mathbb{N}. \bm{\phi}^{\bm{x}_0}_{\pi}(k) \in \mathcal{X}\big)\geq \epsilon_1$, $\forall \bm{x}_0\in \mathcal{X}_0$ .
\end{proposition}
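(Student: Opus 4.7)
The plan is to apply Doob's nonnegative supermartingale inequality to the process $M_k := v(\bm{\phi}_\pi^{\bm{x}_0}(k))$ along trajectories of system \eqref{system} starting from $\bm{x}_0 \in \mathcal{X}_0$. First, I would note that since $\Omega$ is a robust invariant, every trajectory starting in $\mathcal{X}_0 \subseteq \Omega$ remains in $\Omega$ almost surely, so $v$ is defined (and nonnegative) along the entire trajectory. Next, because the disturbances $\bm{d}(0),\bm{d}(1),\ldots$ are i.i.d., the trajectory is a time-homogeneous Markov chain with respect to the natural filtration $\mathcal{F}_k = \sigma(\bm{d}(0),\ldots,\bm{d}(k-1))$, and the third hypothesis
\[
\mathbb{E}^{\infty}\bigl[v(\bm{\phi}_\pi^{\bm{x}}(1))\bigr] - v(\bm{x}) \leq 0 \quad \forall \bm{x} \in \Omega
\]
combined with the Markov property gives $\mathbb{E}^{\infty}[M_{k+1}\mid \mathcal{F}_k] \leq M_k$, i.e., $(M_k)_{k\in\mathbb{N}}$ is a nonnegative supermartingale.

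Then I would invoke Doob's maximal inequality for nonnegative supermartingales: for every $\lambda > 0$,
\[
\mathbb{P}^{\infty}\!\left(\sup_{k\in\mathbb{N}} M_k \geq \lambda \,\Big|\, \bm{x}_0\right) \leq \frac{\mathbb{E}^{\infty}[M_0]}{\lambda} = \frac{v(\bm{x}_0)}{\lambda}.
\]
Taking $\lambda = 1$ and using the first hypothesis $v(\bm{x}_0) \leq 1-\epsilon_1$, the right-hand side is at most $1-\epsilon_1$. The key geometric bridge is the second hypothesis $v \geq 1$ on $\mathcal{X}_{\text{unsafe}}$: if a trajectory ever enters $\mathcal{X}_{\text{unsafe}}$, then $\sup_k M_k \geq 1$. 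Hence the event $\{\exists k.\ \bm{\phi}_\pi^{\bm{x}_0}(k) \in \mathcal{X}_{\text{unsafe}}\}$ is contained in $\{\sup_k M_k \geq 1\}$, and so its probability is bounded by $1-\epsilon_1$.

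Finally, since every trajectory of \eqref{system} starting in $\Omega$ stays in $\Omega$ by robust invariance, and $\Omega = \mathcal{X} \cup \mathcal{X}_{\text{unsafe}}$ is a disjoint decomposition, the events ``$\exists k.\ \bm{\phi}_\pi^{\bm{x}_0}(k) \in \mathcal{X}_{\text{unsafe}}$'' and ``$\forall k.\ \bm{\phi}_\pi^{\bm{x}_0}(k) \in \mathcal{X}$'' are complementary almost surely. Taking complements yields $\mathbb{P}^{\infty}(\forall k.\ \bm{\phi}_\pi^{\bm{x}_0}(k)\in \mathcal{X} \mid \bm{x}_0\in \mathcal{X}_0) \geq \epsilon_1$, as claimed.

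I do not expect a serious obstacle here, since this is a textbook application of Doob's inequality. The main care point is ensuring the supermartingale property is rigorously derived from the pointwise drift condition via the Markov/i.i.d.\ structure, and that the stopping-time-free maximal inequality (rather than just a fixed-time bound) is used so as to capture the infinite-horizon ``$\exists k$'' event. If one prefers to avoid Doob's inequality directly, an equivalent route is to define the hitting time $\tau := \inf\{k\in\mathbb{N}\mid \bm{\phi}_\pi^{\bm{x}_0}(k)\in \mathcal{X}_{\text{unsafe}}\}$, apply the optional stopping theorem to $M_{k\wedge \tau}$ to get $\mathbb{E}^{\infty}[M_{k\wedge\tau}] \leq v(\bm{x}_0)$, use $M_{k\wedge\tau} \geq \mathbf{1}_{\{\tau\leq k\}}$, and let $k\to\infty$ via monotone convergence to conclude $\mathbb{P}^{\infty}(\tau<\infty\mid \bm{x}_0) \leq 1-\epsilon_1$.
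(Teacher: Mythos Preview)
Your argument is correct and is precisely the standard Doob's nonnegative supermartingale inequality approach that underlies this result. The paper itself does not supply an independent proof of this proposition; it simply attributes the statement to Theorem~5 of \cite{anand2022k} and notes that such barrier certificates are ``built upon the well established Doob's nonnegative supermartingale inequality,'' which is exactly the route you take.
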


Similarly, a sufficient condition for determining upper bounds of the liveness probabilities can be obtained straightforwardly from Theorem 16 in \cite{anand2022k}. However, finding a robust invariant set $\Omega$, except for the trivial case of $\Omega = \mathbb{R}^n$, can be challenging and computationally intensive for many systems, if it even exists. On the other hand, when $\Omega = \mathbb{R}^n$ in Proposition \ref{direct_lower}, the resulting constraint \eqref{direct_lower_c} may be too strong, leading to an overly conservative lower bound. We use an example to illustrate this below. 
\begin{example}
\label{illu}
  In this example we consider a computer-based model, which is modified from the reversed-time Van der Pol oscillator based on Euler's method with the time step 0.01:
   \begin{equation}
   \label{vander}
      \begin{cases}
      x(l+1)=x(l)+0.01\Big(-2y(l)\Big)\\
      y(l+1)=y(l)+0.01\Big((0.8+d(l))x(l)+10(x^2(l)-0.21)y(l)\Big)
      \end{cases},
      \end{equation}
      where $d(\cdot)\colon\mathbb{N}\rightarrow \mathcal{D}=[-0.1,0.1]$, $\mathcal{X}=\{\,(x,y)^{\top}\mid h(\bm{x})\leq 0\,\}$ with $h(\bm{x})=x^2+y^2-1$, and $\mathcal{X}_0=\{\,(x,y)^{\top}\mid g(\bm{x})<0\,\}$ with $g(\bm{x})=x^2+y^2-0.01$.  We assume that the probability distribution on $\mathcal{D}$ is the uniform distribution. The lower bound of the liveness probabilities estimated via the Monte Carlo method is 1. 
      
      Via solving \textbf{Op0} (shown in Appendix \ref{sec:appendix_opt}), which is encoded into a semi-definite program \textbf{SDP0} (shown in Appendix \ref{sec:appendix_sdp}) via the sum of squares decomposition for multivariate polynomials,
we obtain a lower bound of the liveness probabilities, which is 2.1368e-07. This is too conservative to be useful in practice.  The resulting semi-definite program is addressed when unknown polynomials of degree 8 are used. $\blacksquare$
\end{example}

In the following we will present weaker sufficient conditions for certifying lower and upper bounds using the switched system \eqref{new_system1}. They are respectively formulated in Theorem \ref{barrier} and \ref{reachability_existing}.

\begin{theorem}
\label{barrier}
Given a safe set $\mathcal{X}$ and an initial set $\mathcal{X}_0$ with $\mathcal{X}_0\subseteq \mathcal{X}$, if there exist a barrier certificate $v(\bm{x})\colon\widehat{\mathcal{X}}\rightarrow \mathbb{R}$ satisfying
   \begin{equation}
   \label{reach_equa_barrier}
   \begin{cases}
       v(\bm{x})\leq 1-\epsilon_1 & \forall \bm{x}\in \mathcal{X}_0\\
       v(\bm{x})\geq \mathbb{E}^{\infty}[v(\bm{\phi}^{\bm{x}}_{\pi}(1))] & \forall \bm{x}\in \mathcal{X}\\
       v(\bm{x})\geq 1 & \forall \bm{x}\in \widehat{\mathcal{X}}\setminus \mathcal{X}\\
       v(\bm{x})\geq 0& \forall \bm{x}\in \widehat{\mathcal{X}}
   \end{cases},
   \end{equation}
   then $\mathbb{P}^{\infty}\big(\exists k\in \mathbb{N}. \bm{\phi}^{\bm{x}_0}_{\pi}(k) \in \widehat{\mathcal{X}}\setminus \mathcal{X}\big)\leq 1-\epsilon_1$, $\forall \bm{x}_0\in \mathcal{X}_0$. Consequently, $\mathbb{P}^{\infty}\big(\forall k\in \mathbb{N}. \bm{\phi}^{\bm{x}_0}_{\pi}(k) \in \mathcal{X}\big)\geq \epsilon_1$, $\forall \bm{x}_0\in \mathcal{X}_0$.
\end{theorem}
\begin{proof}
Constraint \eqref{reach_equa_barrier} is equivalent to the following constraint 
   \begin{equation*}
   \label{reach_equa_barrier1}
   \begin{cases}
       v(\bm{x})\leq 1-\epsilon_1 & \forall \bm{x}\in \mathcal{X}_0,\\
       v(\bm{x})\geq \mathbb{E}^{\infty}[v(\widehat{\bm{\phi}}^{\bm{x}}_{\pi}(1))] & \forall \bm{x}\in \widehat{\mathcal{X}}\\
       v(\bm{x})\geq 1 & \forall \bm{x}\in \mathcal{X}_{unsafe}(=\widehat{\mathcal{X}}\setminus \mathcal{X})\\
       v(\bm{x})\geq 0 & \forall \bm{x}\in \widehat{\mathcal{X}}
   \end{cases}.
   \end{equation*}
Therefore, $v(\bm{x})$ is a classical nonnegative supermartingale based barrier certificate for system \eqref{new_system1} with the invariant set $\widehat{\mathcal{X}}$ and the unsafe set $\widehat{\mathcal{X}}\setminus \mathcal{X}$. Therefore, according to Theorem 5 in \cite{anand2022k}, we have the conclusion that the probability of reaching the unsafe set $\widehat{\mathcal{X}}\setminus \mathcal{X}$ for system \eqref{new_system1} starting from each state in $\mathcal{X}_0$ is less than or equal to $1-\epsilon_1$, i.e., $\mathbb{P}^{\infty}\big(\exists k\in \mathbb{N}. \widehat{\bm{\phi}}^{\bm{x}_0}_{\pi}(k) \in \widehat{\mathcal{X}}\setminus \mathcal{X}\big)\leq 1-\epsilon_1$, $\forall \bm{x}_0\in \mathcal{X}_0$. Therefore, \[\mathbb{P}^{\infty}\big(\forall k\in \mathbb{N}. \widehat{\bm{\phi}}^{\bm{x}_0}_{\pi}(k) \in  \mathcal{X}\big)\geq \epsilon_1, \forall \bm{x}_0\in \mathcal{X}_0.\] Since if $\bm{x}_0\in \mathcal{X}$, $\widehat{\bm{\phi}}^{\bm{x}_0}_{\pi}(1)=\bm{\phi}^{\bm{x}_0}_{\pi}(1)$ holds. Consequently, $\mathbb{P}^{\infty}\big(\forall k\in \mathbb{N}. \bm{\phi}^{\bm{x}_0}_{\pi}(k) \in \mathcal{X}\big)\geq \epsilon_1$, $\forall \bm{x}_0\in \mathcal{X}_0$.
\end{proof}

According to Theorem \ref{barrier}, a lower bound of the liveness probabilities can be computed via solving \textbf{Op1} (shown in Appendix \ref{sec:appendix_opt}).

\begin{example}
   Consider Example \ref{illu} again. By solving \textbf{Op1} using $\widehat{\mathcal{X}}=\{\,(x,y)^{\top} \mid \widehat{h}(\bm{x})\leq 0\,\}$ with $\widehat{h}(\bm{x})=x^2+y^2-2$, we encode it into a semi-definite program \textbf{SDP1}(shown in Appendix) via the sum of squares decomposition for multivariate polynomials. The solution yields a lower bound for the liveness probabilities, which is 0.9465. \textbf{SDP1} is addressed when unknown polynomials of degree 8 are used. $\blacksquare$
\end{example}

\begin{theorem}
\label{reachability_existing}
Assume that $\mathcal{X}_{unsafe}=\widehat{\mathcal{X}}\setminus \mathcal{X}$ and $\mathcal{X}$ is a closed set\footnote{The requirement that $\mathcal{X}$ is closed is reflected in \eqref{super_reach1} in the proof.}. If there exists a function $v(\bm{x})\colon \widehat{\mathcal{X}}\rightarrow \mathbb{R}$ satisfying 
    \begin{equation}
    \label{super_reach}
        \begin{cases}
            v(\bm{x})\leq \epsilon_2 & \forall \bm{x}\in \mathcal{X}_0,\\
            v(\bm{x})\geq 1 & \forall \bm{x} \in \partial \widehat{\mathcal{X}}\setminus \partial \mathcal{X}_{unsafe}\\
            \mathbb{E}^{\infty}[v(\bm{\phi}_{\pi}^{\bm{x}}(1))]-v(\bm{x}) \leq -\delta & \forall \bm{x} \in \mathcal{X}\\
             v(\bm{x})\geq 0 & \forall \bm{x}\in \widehat{\mathcal{X}}
        \end{cases},
    \end{equation}
   where $\delta > 0$ is a user-defined value, then $\mathbb{P}^{\infty}\big(\forall k\in \mathbb{N}. \bm{\phi}^{\bm{x}_0}_{\pi}(k) \in \mathcal{X}\big)\leq \epsilon_2$, $\forall  \bm{x}_0\in \mathcal{X}_0$. 
\end{theorem}
\begin{proof}
    Constraint \eqref{super_reach} is equivalent to the following constraint 
   \begin{equation}
   \label{super_reach1}
   \begin{cases}
           v(\bm{x})\leq \epsilon_2 & \forall \bm{x}\in \mathcal{X}_0,\\
            v(\bm{x})\geq 1 & \forall \bm{x} \in \partial \widehat{\mathcal{X}}\setminus \partial \mathcal{X}_{unsafe}\\
            \mathbb{E}^{\infty}[v(\widehat{\bm{\phi}}_{\pi}^{\bm{x}}(1))]-v(\bm{x}) \leq -\delta & \forall \bm{x} \in \overline{\widehat{\mathcal{X}}\setminus \mathcal{X}_{unsafe}}\\
             v(\bm{x})\geq 0 & \forall \bm{x}\in \widehat{\mathcal{X}}
   \end{cases},
   \end{equation}
   According to Theorem 16 in \cite{anand2022k} and following the proof of Theorem \ref{barrier}, we have the conclusion $\mathbb{P}^{\infty}\big(\forall k\in \mathbb{N}. \bm{\phi}^{\bm{x}_0}_{\pi}(k) \in \mathcal{X}\big)\leq \epsilon_2$, $\forall  \bm{x}_0\in \mathcal{X}_0$.  
\end{proof}

\begin{remark}
\label{remark_bounded}
If $v(\bm{x})$ is bounded over $\widehat{\mathcal{X}}$ in \eqref{super_reach}, constraint \eqref{super_reach} provides strong guarantees of leaving the safe set $\mathcal{X}$ almost surely, i.e., $\mathbb{P}^{\infty}\big(\forall k\in \mathbb{N}. \bm{\phi}^{\bm{x}_0}_{\pi}(k) \in \mathcal{X}\big)=0$, $\forall  \bm{x}_0\in \mathcal{X}_0$. This conclusion is justified as follows.

From $\mathbb{E}^{\infty}[v(\widehat{\bm{\phi}}_{\pi}^{\bm{x}}(1))]-v(\bm{x}) \leq -\delta, \forall \bm{x} \in \overline{\widehat{\mathcal{X}}\setminus \mathcal{X}_{unsafe}}$, where $\mathcal{X}_{unsafe}=\widehat{\mathcal{X}}\setminus \mathcal{X}$, we have 
\[\mathbb{E}^{\infty}[v(\widehat{\bm{\phi}}_{\pi}^{\bm{x}}(1))]-v(\bm{x}) -\delta 1_{\mathcal{X}_{unsafe}}(\bm{x})\leq -\delta, \forall \bm{x}\in \widehat{\mathcal{X}}.\]
Thus, for $\bm{x}\in \widehat{\mathcal{X}}$, we have 
\[\mathbb{E}^{\infty}[v(\widehat{\bm{\phi}}_{\pi}^{\bm{x}}(k))]-v(\bm{x}) -\delta \sum_{i=0}^{k-1}\mathbb{E}^{\infty}[1_{\mathcal{X}_{unsafe}}(\widehat{\bm{\phi}}_{\pi}^{\bm{x}}(i))]\leq -k\delta, \]
which implies 
$\mathbb{P}^{\infty}\big(\exists k\in \mathbb{N}. \widehat{\bm{\phi}}^{\bm{x}}_{\pi}(k) \in \widehat{\mathcal{X}}\setminus \mathcal{X} \big)=\lim_{k\rightarrow \infty}\frac{ \sum_{i=0}^{k-1}\mathbb{E}^{\infty}[1_{\mathcal{X}_{unsafe}}(\widehat{\bm{\phi}}_{\pi}^{\bm{x}}(i))]}{k}$ $ \geq 1$ (according to Lemma 3 in \cite{xue2021reach}) Thus, we have the conclusion. 
\end{remark}

It is worth noting that if $\partial \widehat{\mathcal{X}}\cap \partial \mathcal{X}=\emptyset$, the set $\partial \widehat{\mathcal{X}}\setminus \partial \mathcal{X}_{unsafe}$ in \eqref{super_reach} is empty. As a result, the constraint $v(\bm{x})\geq 1, \forall \bm{x} \in \partial \widehat{\mathcal{X}}\setminus \partial \mathcal{X}_{unsafe}$ becomes redundant and can be removed. Throughout this paper, unless explicitly stated otherwise, we assume that $\partial \widehat{\mathcal{X}}\cap \partial \mathcal{X}=\emptyset$. It is worth mentioning that this assumption is not overly strict and can be easily satisfied by enlarging the set that satisfies \eqref{sets}. The primary role of this assumption is to facilitate solving constraint \eqref{super_reach}. 
Accordingly, based on Theorem \ref{reachability_existing}, we can calculate an upper bound for the liveness probabilities by solving \textbf{Op2} (shown in Appendix \ref{sec:appendix_opt}).

\begin{remark}
Another condition, which is analogous to the one in Theorem \ref{reachability_existing}, was proposed in \cite{chakarov2013probabilistic}. 
\begin{proposition}
   If there exist a function  $v(\bm{x})\colon \mathcal{X}\rightarrow \mathbb{R}_{\geq 0}$ and constant $c>0$ such that 
    \begin{equation}
    \label{almost}
        \begin{cases}
            v(\bm{x})\geq c & \forall \bm{x}\in \mathcal{X}\\
            v(\bm{x})<c & \forall \bm{x}\in \mathcal{X}_{unsafe}(=\widehat{\mathcal{X}}\setminus \mathcal{X})\\
            \mathbb{E}^{\infty}[v(\bm{\phi}^{\bm{x}}_{\pi}(1))]-v(\bm{x})\leq -1 & \forall \bm{x}\in \mathcal{X}
        \end{cases},
    \end{equation}
    then $\mathbb{P}^{\infty}\big(\forall k\in \mathbb{N}. \bm{\phi}^{\bm{x}_0}_{\pi}(k) \in \mathcal{X}\big)=0$, $\forall  \bm{x}_0\in \mathcal{X}_0$. 
\end{proposition}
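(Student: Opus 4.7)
The plan is to mirror the telescoping supermartingale argument already sketched in Remark \ref{remark_bounded}, but with the simpler, uniform drift $-1$ replacing the quantity $\delta$ and the role of $\mathcal{X}_{unsafe}$ and $\mathcal{X}$ swapped (the one-step decrease now holds \emph{on} $\mathcal{X}$ rather than outside of it). First I would pass from $\bm{\phi}$ to the auxiliary switched trajectory $\widehat{\bm{\phi}}$ of system \eqref{new_system1}, which agrees with $\bm{\phi}$ on $\mathcal{X}$ and freezes once it leaves $\mathcal{X}$. Since $\widehat{\bm{\phi}}^{\bm{x}}_{\pi}(1)=\bm{\phi}^{\bm{x}}_{\pi}(1)$ for $\bm{x}\in\mathcal{X}$ and $\widehat{\bm{\phi}}^{\bm{x}}_{\pi}(1)=\bm{x}$ for $\bm{x}\in\mathcal{X}_{unsafe}$, the third line of \eqref{almost} transfers to
\[
\mathbb{E}^{\infty}[v(\widehat{\bm{\phi}}^{\bm{x}}_{\pi}(1))]-v(\bm{x})\leq -1_{\mathcal{X}}(\bm{x}),\qquad \forall\bm{x}\in\widehat{\mathcal{X}}.
\]

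Next I would iterate this inequality and take a total expectation, obtaining by telescoping (exactly as in Remark \ref{remark_bounded})
\[
\mathbb{E}^{\infty}[v(\widehat{\bm{\phi}}^{\bm{x}_0}_{\pi}(k))]\leq v(\bm{x}_0)-\sum_{i=0}^{k-1}\mathbb{P}^{\infty}(\widehat{\bm{\phi}}^{\bm{x}_0}_{\pi}(i)\in\mathcal{X}),
\]
for every $k\in\mathbb{N}$ and every $\bm{x}_0\in\mathcal{X}_0\subseteq\mathcal{X}$. Because $v\geq 0$, the left-hand side is non-negative, so the partial sums on the right are bounded by $v(\bm{x}_0)<\infty$, giving $\sum_{i=0}^{\infty}\mathbb{P}^{\infty}(\widehat{\bm{\phi}}^{\bm{x}_0}_{\pi}(i)\in\mathcal{X})\leq v(\bm{x}_0)$.

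Finally, since the frozen dynamics forbids re-entry into $\mathcal{X}$, the events $\{\widehat{\bm{\phi}}^{\bm{x}_0}_{\pi}(i)\in\mathcal{X}\}$ are decreasing in $i$, so their probabilities are non-increasing; a non-increasing summable sequence tends to $0$, whence $\mathbb{P}^{\infty}(\forall k\in\mathbb{N}.\,\widehat{\bm{\phi}}^{\bm{x}_0}_{\pi}(k)\in\mathcal{X})=\lim_{k\to\infty}\mathbb{P}^{\infty}(\widehat{\bm{\phi}}^{\bm{x}_0}_{\pi}(k)\in\mathcal{X})=0$. Transferring back via the identity $\mathbb{P}^{\infty}(\forall k.\,\bm{\phi}^{\bm{x}_0}_{\pi}(k)\in\mathcal{X})=\mathbb{P}^{\infty}(\forall k.\,\widehat{\bm{\phi}}^{\bm{x}_0}_{\pi}(k)\in\mathcal{X})$ recorded earlier in Section \ref{SIV} yields the claim.

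The main obstacle I anticipate is not mathematical depth but a small matter of bookkeeping: as stated, $v$ is declared only on $\mathcal{X}$, yet the second line of \eqref{almost} evaluates $v$ on $\mathcal{X}_{unsafe}$, so before anything else one must interpret $v$ as defined on $\widehat{\mathcal{X}}$ (the obvious fix) in order for $\mathbb{E}^{\infty}[v(\widehat{\bm{\phi}}^{\bm{x}}_{\pi}(1))]$ to be well posed over all of $\widehat{\mathcal{X}}$. Related to this, I would want to verify that the constant $c$ is inessential for the almost-sure conclusion itself: the separation $v\geq c$ on $\mathcal{X}$ versus $v<c$ on $\mathcal{X}_{unsafe}$ is what certifies that reaching the sublevel set $\{v<c\}$ coincides with leaving $\mathcal{X}$, but the probability bound above only uses $v\geq 0$ and the strict drift $-1$, so the proof reduces to the Remark \ref{remark_bounded} computation carried out with drift $1$ in place of $\delta$.
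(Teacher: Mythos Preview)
Your argument is correct and follows the same route as the paper: pass to the switched system \eqref{new_system1}, lift the one-step decrease to all of $\widehat{\mathcal{X}}$, and then telescope. The paper simply cites Theorem~19 of \cite{anand2022k} at this point, whereas you unpack the supermartingale computation explicitly; your use of the absorbing nature of $\mathcal{X}_{unsafe}$ to pass from summability of $\mathbb{P}^{\infty}(\widehat{\bm{\phi}}^{\bm{x}_0}_{\pi}(i)\in\mathcal{X})$ directly to its vanishing limit is a tidy alternative to the Ces\`aro route of Remark~\ref{remark_bounded}, and your observation that the constant $c$ plays no role once one works with the switched system is accurate.
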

\begin{proof}
The proof is similar to the one of Theorem \ref{barrier}, and the conclusion is justified from Theorem 19 in \cite{anand2022k}. 
\end{proof}
\end{remark}

\begin{remark}
    It is worth remarking here that we do not extend the k-Inductive Barrier certificates proposed in \cite{anand2022k} for addressing the problem in this paper. A set of sufficient conditions, which is similar to the one in Proposition \ref{direct_lower} can be easily obtained based on k-Inductive barrier certificates. However, the gain of sufficient conditions being analogous to the ones in Theorem \ref{barrier} and \ref{reachability_existing} should be carefully treated and will be considered in the future work.    
\end{remark}

\subsection{Equations Relaxation Based Invariance Verification}
In this subsection, we present the second set of optimizations for addressing the safe invariance verification problem in Definition \ref{ravoid}. We begin by introducing an equation that characterizes the liveness probability of staying the safe set $\mathcal{X}$. This equation is adapted from \eqref{reach_equation}, where the unsafe set $\widehat{\mathcal{X}}\setminus \mathcal{X}$ is regarded as the target set $\mathcal{X}_r$ in \eqref{reach_equation}. We then propose two sufficient conditions for certifying lower and upper bounds of the liveness probabilities by relaxing the derived equation. 


\begin{lemma}
\label{invariance}
Given a safe set $\mathcal{X}$, if there exist a function $v(\bm{x})\colon\widehat{\mathcal{X}}\rightarrow \mathbb{R}$ \footnote{Comparing with Proposition \ref{theorem_reach}, the explicit requirement that $v(\bm{x})$ is bounded is abandoned here. If following the proof of Proposition \ref{theorem_reach} in \cite{xue2021reach}, one can find that this requirement is not necessary.} and a bounded function $w(\bm{x})\colon \widehat{\mathcal{X}}\rightarrow \mathbb{R}$ such that for $\bm{x}\in \widehat{\mathcal{X}}$, 
   \begin{equation}
   \label{reach_equa}
   \begin{cases}
       v(\bm{x})=\mathbb{E}^{\infty}[v(\widehat{\bm{\phi}}^{\bm{x}}_{\pi}(1))]\\
       v(\bm{x})=1_{\widehat{\mathcal{X}}\setminus \mathcal{X}}(\bm{x})+\mathbb{E}^{\infty}[w(\widehat{\bm{\phi}}^{\bm{x}}_{\pi}(1))]-w(\bm{x})
   \end{cases},
   \end{equation}
then for $\bm{x}_0\in \mathcal{X}$,
\begin{equation*}
\begin{split}
&v(\bm{x}_0)=\mathbb{P}^{\infty}\big(\exists k\in \mathbb{N}. \bm{\phi}^{\bm{x}_0}_{\pi}(k) \in \widehat{\mathcal{X}}\setminus \mathcal{X}\big)
=\mathbb{P}^{\infty}\big(\exists k\in \mathbb{N}. \widehat{\bm{\phi}}^{\bm{x}_0}_{\pi}(k) \in \widehat{\mathcal{X}}\setminus \mathcal{X} \big)\\
&=\lim_{i\rightarrow \infty}\frac{\mathbb{E}^{\infty}[\sum_{j=0}^{i-1} 1_{\widehat{\mathcal{X}}\setminus \mathcal{X}}(\widehat{\bm{\phi}}^{\bm{x}_0}_{\pi}(j))]}{i}.
\end{split}
\end{equation*} Thereby, $\mathbb{P}^{\infty}\big(\forall k\in \mathbb{N}. \bm{\phi}^{\bm{x}_0}_{\pi}(k) \in \mathcal{X}\big)=1-v(\bm{x}_0)$, $\forall \bm{x}_0\in \mathcal{X}$.
\end{lemma}
\begin{proof}
The conclusion can be assured by following the proof of Theorem 1 in \cite{xue2021reach}.
\end{proof}


Like Corollary \ref{reachability}, two sufficient conditions can be obtained for certifying lower and upper bounds of the liveness probabilities via directly relaxing equation \eqref{reach_equa}, respectively.

\begin{theorem}
\label{upper_invariance}
Given a safe set $\mathcal{X}$ and an initial set $\mathcal{X}_0$ with $\mathcal{X}_0\subseteq \mathcal{X}$, if there exist a function $v(\bm{x})\colon\widehat{\mathcal{X}}\rightarrow \mathbb{R}$ and a bounded function $w(\bm{x})\colon \widehat{\mathcal{X}}\rightarrow \mathbb{R}$ satisfying
   \begin{equation}
   \label{reach_equa2}
   \begin{cases}
       v(\bm{x})\leq 1-\epsilon_1 & \forall \bm{x}\in \mathcal{X}_0\\
       v(\bm{x})\geq \mathbb{E}^{\infty}[v(\bm{\phi}^{\bm{x}}_{\pi}(1))]& \forall \bm{x}\in \mathcal{X}\\
       v(\bm{x})\geq \mathbb{E}^{\infty}[w(\bm{\phi}^{\bm{x}}_{\pi}(1))]-w(\bm{x}) & \forall \bm{x}\in \mathcal{X}\\
       v(\bm{x})\geq 1 & \forall \bm{x}\in \widehat{\mathcal{X}}\setminus \mathcal{X}
   \end{cases},
   \end{equation}
   then $\mathbb{P}^{\infty}\big(\exists k\in \mathbb{N}. \bm{\phi}^{\bm{x}_0}_{\pi}(k) \in \widehat{\mathcal{X}}\setminus \mathcal{X}\big)\leq v(\bm{x}_0) \leq 1-\epsilon_1$, $\forall \bm{x}_0\in \mathcal{X}_0$.  Consequently, $\mathbb{P}^{\infty}\big(\forall k\in \mathbb{N}. \bm{\phi}^{\bm{x}_0}_{\pi}(k) \in \mathcal{X}\big)\geq \epsilon_1$, $\forall \bm{x}_0\in \mathcal{X}_0$.
\end{theorem}
\begin{proof}
The conclusion can be assured by following the proof of Corollary 2 in \cite{xue2021reach}, with the inequality signs reversed.
\end{proof}

According to Theorem \ref{upper_invariance}, a lower bound of the liveness probabilities can be computed via solving \textbf{Op3} (shown in Appendix \ref{sec:appendix_opt}).

Although \eqref{reach_equa_barrier} and \eqref{reach_equa2} are derived using different methods and have different forms, they are equivalent. 

\begin{proposition}
    \label{coro:discrete}
    Constraints \eqref{reach_equa_barrier} and \eqref{reach_equa2} are equivalent.
\end{proposition}
\begin{proof}
Since \eqref{reach_equa_barrier} is typical form of \eqref{reach_equa2} with $w(\bm{x})=0$ for $\bm{x}\in \overline{\mathcal{X}}$,  we only need to prove that $v(\bm{x}) \geq 0, \forall \bm{x} \in \widehat{\mathcal{X}}$ if $v(\bm{x})$ satisfies \eqref{reach_equa2}.
    
    Assume $v(\bm{x})$ satisfies \eqref{reach_equa2} and there exits $\bm{x}_0 \in \widehat{\mathcal{X}}$ such that $v(\bm{x_0}) < -\delta$, where $\delta >0$. According to $v(\bm{x})\geq \mathbb{E}[v(\widehat{\bm{\phi}}^{\bm{x}}_{\pi}(1))], \forall \bm{x}\in \widehat{\mathcal{X}}$, we have  
    \begin{equation}
        \label{eq:coro1_1}
        \mathbb{E}^\infty[v(\widehat{\bm{\phi}}^{\bm{x}_0}_{\pi}(k))] \leq v(\bm{x}_0) < -\delta, \forall k \in \mathbb{N}.
    \end{equation}
    Also, since $v(\bm{x})\geq 1_{\widehat{\mathcal{X}}\setminus \mathcal{X}}(\bm{x}) +\mathbb{E}[w(\widehat{\bm{\phi}}^{\bm{x}}_{\pi}(1))]-w(\bm{x})$ for $\forall \bm{x} \in \widehat{\mathcal{X}}$, we have $v(\bm{x})\geq \mathbb{E}[w(\widehat{\bm{\phi}}^{\bm{x}}_{\pi}(1))]-w(\bm{x}), \forall \bm{x} \in \widehat{\mathcal{X}}.$
    
    
    

  Thus, we have
    \begin{equation*}
    \begin{split}
       (j+1)v(\bm{x}_0)&\geq \sum_{k=0}^j \mathbb{E}^\infty[v(\widehat{\bm{\phi}}^{\bm{x}_0}_{\pi}(k))] 
        \mathbb{E}^\infty[w(\widehat{\bm{\phi}}^{\bm{x}_0}_{\pi}(j+1))] - w(\bm{x}_0),\forall j \in \mathbb{N}. 
    \end{split}
    \end{equation*}

    According to \eqref{eq:coro1_1}, we have that
    \begin{equation*}
    \begin{split}
       \mathbb{E}^\infty[w(\widehat{\bm{\phi}}^{\bm{x}_0}_{\pi}(j+1)] - w(\bm{x}_0) \leq & \sum_{k=0}^j \mathbb{E}^\infty[v(\widehat{\bm{\phi}}^{\bm{x}_0}_{\pi}(k))] < -\delta (j+1),\ \forall j \in \mathbb{N},
    \end{split}
    \end{equation*}
    implying that 
    \begin{equation}
    \label{eq:coro1_3}
        \frac{\mathbb{E}^\infty[w(\widehat{\bm{\phi}}^{\bm{x}_0}_{\pi}(j+1)] - w(\bm{x}_0)}{j+1} < -\delta, \forall j \in \mathbb{N}.
    \end{equation}
    which contradicts
    \[\lim_{j\rightarrow\infty}\frac{\mathbb{E}^\infty[w(\widehat{\bm{\phi}}^{\bm{x}_0}_{\pi}(j+1)] - w(\bm{x}_0)}{j} = 0.\]
     Therefore, we can conclude that if $v(\bm{x})$ satisfies \eqref{reach_equa2}, $v(\bm{x}) \geq 0$ holds for $\bm{x} \in \widehat{\mathcal{X}}$. Thus, conditions \eqref{reach_equa_barrier} and \eqref{reach_equa2} are equivalent.
\end{proof}

\begin{theorem}
\label{lower_invariance}
Given a safe set $\mathcal{X}$ and an initial set $\mathcal{X}_0$ with $\mathcal{X}_0\subseteq \mathcal{X}$, if there exist a function $v(\bm{x})\colon\widehat{\mathcal{X}}\rightarrow \mathbb{R}$ and a bounded function $w(\bm{x})\colon \widehat{\mathcal{X}}\rightarrow \mathbb{R}$ satisfying 
   \begin{equation}
   \label{reach_equa1}
   \begin{cases}
       v(\bm{x})\geq 1-\epsilon_2 & \forall \bm{x}\in \mathcal{X}_0,\\
       v(\bm{x})\leq \mathbb{E}^{\infty}[v(\bm{\phi}^{\bm{x}}_{\pi}(1))]& \forall \bm{x}\in \mathcal{X}\\
       v(\bm{x})\leq \mathbb{E}^{\infty}[w(\bm{\phi}^{\bm{x}}_{\pi}(1))]-w(\bm{x})& \forall \bm{x}\in \mathcal{X}\\
       v(\bm{x})\leq 1 & \forall \bm{x}\in \widehat{\mathcal{X}}\setminus \mathcal{X}
   \end{cases},
   \end{equation}
  then $\mathbb{P}^{\infty}\big(\exists k\in \mathbb{N}. \bm{\phi}^{\bm{x}_0}_{\pi}(k) \in \widehat{\mathcal{X}}\setminus \mathcal{X}\big)\geq v(\bm{x}_0)\geq  1-\epsilon_2$, $\forall \bm{x}_0\in \mathcal{X}_0$.  Consequently, $\mathbb{P}^{\infty}\big(\forall k\in \mathbb{N}. \bm{\phi}^{\bm{x}_0}_{\pi}(k) \in  \mathcal{X}\big)\leq \epsilon_2$, $\forall \bm{x}_0\in \mathcal{X}_0$.
\end{theorem}
\begin{proof}
The conclusion can be assured by following the proof of Corollary 2 in \cite{xue2021reach}.
\end{proof}

By comparing constraints \eqref{super_reach} and \eqref{reach_equa1}, we can conclude when $\mathcal{X}$ is closed and $v(\bm{x})$ is bounded over $\widehat{\mathcal{X}}$ that \eqref{reach_equa1} is weaker than \eqref{super_reach}. This is because if $v(\bm{x})$ satisfies \eqref{super_reach}, $1-v(\bm{x})$ satisfies \eqref{reach_equa1} with $w(\bm{x})=M(1-v(\bm{x}))$ for $\bm{x}\in \widehat{\mathcal{X}}$, where $M\delta \geq  \sup_{\bm{x}\in \widehat{\mathcal{X}}}(1-v(x))$.


According to Theorem \ref{lower_invariance}, an upper bound of the liveness probabilities can be computed via solving \textbf{Op4} (shown in Appendix \ref{sec:appendix_opt}).

\section{Preliminaries on Continuous-time systems}
\label{sec:pre_continuous}
In this section, we first present stochastic continuous-time systems of interest and the safe probabilistic invariance verification problem. After that, we recall the equation in \cite{xue2023reach}, which characterizes the exact probability of reaching target sets eventually while avoiding unsafe states.

\subsection{Problem Statement}
Given a comlete probability space $(\Omega, \mathcal{F}, \mathbb{P})$\cite{oksendal2013stochastic}, a random variable $\bm{X}$ defined on it is an $\mathcal{F}$-measurable function $\bm{X}\colon\Omega \rightarrow \mathbb{R}^n$. A continuous-time stochastic process is a parameterized collection of random variables $\{\bm{X}(t, \bm{w}), t\in T\}$ where the parameter space $T$ can be either the halfline $\mathbb{R}_{\geq 0}$ or an interval $[a, b]$. We consider stochastic systems modeled by time-homogeneous SDEs of the form
\begin{equation}
\label{sde}
    d\bm{X}(t,\bm{w}) = \bm{b}(\bm{X}(t,\bm{w}))dt+\bm{\sigma}(\bm{X}(t,\bm{w}))d\bm{W}(t,\bm{w}),\ t\geq 0,
\end{equation}
where $\bm{X}(\cdot, \cdot)\colon T\times\Omega\rightarrow\mathbb{R}^n$ is an $n$-dimensional continuous-time stochastic process, $\bm{W}(\cdot, \cdot)\colon T\times\Omega\rightarrow\mathbb{R}^m$ is an $m$-dimensional Wiener process (standard Brownian motion), and both mapping $\bm{b}(\cdot)\colon\mathbb{R}^n\rightarrow\mathbb{R}^n$ and $\bm{\sigma}(\cdot)\colon\mathbb{R}^n\rightarrow\mathbb{R}^{n\times m}$ satisfy locally Lipschitz conditions. Given an initial state $\bm{x}_0 \in \mathbb{R}^n$, system \eqref{sde} has a unique (maximal local) strong solution over some time interval $[0, T^{\bm{x}_0}(\bm{w}))$ for $\bm{w} \in \Omega$, where $T^{\bm{x}_0}(\bm{w})$ is a positive real value. We denote it as $\bm{X}^{\bm{x}_0}(\cdot, \bm{w})\colon[0, T^{\bm{x}_0}(\bm{w}))\times\Omega\rightarrow\mathbb{R}^n$, which satisfies the stochastic integral equation
\begin{equation*}
    \begin{split}
\bm{X}^{\bm{x}_0}(t, \bm{w})=\bm{x}_0 & +\int_0^t \bm{b}\left(\bm{X}^{\bm{x}_0}(s, \bm{w})\right) d s +\int_0^t \bm{\sigma}\left(\bm{X}^{\bm{x}_0}(s, \bm{w})\right) d \bm{W}(s, \bm{w}),
    \end{split}
\end{equation*}
for $t \in [0, T^{\bm{x}_0}(\bm{w}))$.

The infinitesimal generator underlying system \eqref{sde}, which characterizes the dynamics of the Itô diffusion at each point in its trajectory, is presented in Definition \ref{def:generator}.

\begin{definition}
    \label{def:generator}
    \cite{oksendal2013stochastic} Let $\boldsymbol{X}^{\boldsymbol{x}}(t, \boldsymbol{w})$ be a time-homogeneous Itô diffusion given by SDE\eqref{sde} with initial state $x \in \mathbb{R}^n$. The infinitesimal generator $\mathcal{A}$ of $\boldsymbol{X}^{\boldsymbol{x}}(t, \boldsymbol{w})$ is   
\begin{equation*}
    \begin{split}
       \mathcal{A} f(\boldsymbol{x})&=\lim _{t \rightarrow 0} \frac{E\left[f\left(\boldsymbol{X}^{\boldsymbol{x}}(t, \boldsymbol{w})\right)\right]-f(\boldsymbol{x})}{t} =\frac{\partial f(\bm{x})}{\partial\bm{x}}\bm{b}(\bm{x})+\frac{1}{2}\text{tr}(\bm{\sigma}(\bm{x})^\top\frac{\partial^2f(\bm{x})}{\partial\bm{x}^2}\bm{\sigma}(\bm{x}))
    \end{split}
\end{equation*}
for any $f \in \mathcal{C}^2\left(\mathbb{R}^n\right)$, where $\mathcal{C}^2\left(\mathbb{R}^n\right)$ denotes the set of twice continuously differentiable functions.
\end{definition}

Next we introduce Dynkin's formula, which provides the expected value of a smooth function of an \textit{Itô} diffusion at a stopping time.

\begin{lemma}[Dynkin's formula, \cite{oksendal2013stochastic}]
\label{theo:dynkin}
    Let $\bm{X}^{\bm{x}}(t, \bm{w})$ be a time-homogeneous \textit{Itô} diffusion given by SDE\eqref{sde} with the initial state $\bm{x} \in \mathbb{R}^n$. Suppose $\tau$ is a stopping time with $E[\tau]<\infty$, and $f \in \mathcal{C}^2\left(\mathbb{R}^n\right)$ with compact support. Then
    \[E\left[f\left(\bm{X}^{\bm{x}}(\tau, \bm{w})\right)\right]=f(\bm{x})+E\left[\int_0^\tau \mathcal{A} f\left(\bm{X}^{\bm{x}}(t, \bm{w})\right) d t\right].\]

\end{lemma}

Similar to Definition \ref{ravoid}, we define the safe probabilistic  invariance verification problem for continuous-time systems.
\begin{definition}
\label{def:ra_continuous}
Given a safe set $\mathcal{X}\subset\mathbb{R}^n$, which is bounded and open, and an initial set $\mathcal{X}_0\subseteq \mathcal{X}$, the safe probabilistic invariance verification for system \eqref{sde} is to compute lower and upper bounds, denoted by $\epsilon_1\in [0,1]$ and $\epsilon_2\in [0,1]$ respectively, for the liveness probabilities that the system, starting from $\mathcal{X}_0$, will remain inside the safe set $\mathcal{X}$ for all time, i.e., to compute $\epsilon_1$ and $\epsilon_2$ such that  
\begin{equation}
\label{eq:ra_verification_continuous}
\epsilon_1\leq \mathbb{P}(
\forall t\in \mathbb{R}_{\geq 0}. \bm{X}^{\bm{x}_0}(t,\bm{w})\in \mathcal{X})\leq \epsilon_2, \forall \bm{x}_0\in \mathcal{X}_0.
\end{equation}
\end{definition}

\subsection{Reachability Probability Characterization in \cite{xue2023reach}}
In this subsection, we will recall an equation, which can characterize the precise probability of system \eqref{sde} entering a specified target set eventually while remaining within a safe set before the first target hitting time.

\begin{proposition}[Theorem 2, \cite{xue2023reach}]
\label{theo:ra_continuous}
Given a bounded and open safe set $\mathcal{X}$ and a target set $\mathcal{X}_r \subseteq \mathcal{X}$, if there exist $v(\bm{x}) \in \mathcal{C}^2(\overline{\mathcal{X}})$ and $u(\bm{x}) \in \mathcal{C}^2(\overline{\mathcal{X}})$ such that for $\bm{x} \in \overline{\mathcal{X}}$,
\begin{equation}
    \label{eq:ra_continuous}
   \begin{cases}
       \widetilde{\mathcal{A}} v(\bm{x}) = 0\\
       v(\bm{x})=1_{\mathcal{X}_r}(\bm{x})+\widetilde{\mathcal{A}}u(\bm{x})
   \end{cases},
\end{equation}
where 
\begin{equation*}
    \widetilde{\mathcal{A}}v(\bm{x})=
    \begin{cases}
    \mathcal{A}v(\bm{x}) & \text{if}\ \bm{x} \in \mathcal{X}\setminus\mathcal{X}_r\\
    0 & \text{if}\ \bm{x} \in \partial\mathcal{X}\cup\mathcal{X}_r
    \end{cases},
    \widetilde{\mathcal{A}}u(\bm{x})=
    \begin{cases}
    \mathcal{A}u(\bm{x}) & \text{if}\ \bm{x} \in \mathcal{X}\setminus\mathcal{X}_r\\
    0 & \text{if}\ \bm{x} \in \partial\mathcal{X}\cup\mathcal{X}_r
    \end{cases}.\\\\
\end{equation*}
Then for $\bm{x}_0\in \mathcal{X}$,
\begin{equation*}\begin{split}
   & \mathbb{P}\big(\exists \tau \in \mathbb{R}_{\geq 0}.  \bm{X}^{\bm{x}_0}(\tau,\bm{w}) \in \mathcal{X}_r\wedge\forall t \in [0, \tau). \bm{X}^{\bm{x}_0}(t,\bm{w}) \in \mathcal{X} \big)\\
   =&\mathbb{P}\big(\exists \tau \in \mathbb{R}_{\geq 0}. \widetilde{\bm{X}}^{\bm{x}_0}(\tau,\bm{w}) \in \mathcal{X}_r\big)=\lim_{\tau\rightarrow \infty}\frac{\mathbb{E}[\int_0^\tau 1_{\mathcal{X}_r}(\widetilde{\bm{X}}^{\bm{x}_0}(t,\bm{w}))dt]}{\tau}=v(\bm{x}_0),
   \end{split}
   \end{equation*} where 
$\widetilde{\bm{X}}^{\bm{x}_0}(t,\bm{w}))$ is a stopped stochastic process 
\begin{equation*}
\widetilde{\bm{X}}^{\bm{x}_0}(t,\bm{w}))=
\begin{cases}
\bm{X}^{\bm{x}_0}(t,\bm{w}) & \text{if}\ t < \tau^{x_0}(\bm{w})\\
\bm{X}^{\bm{x}_0}(\tau^{x_0}(\bm{w}),\bm{w}) & \text{if}\ t \geq \tau^{x_0}(\bm{w})
\end{cases},
\end{equation*}
where $\tau^{x_0}(\bm{w})=\inf \{\,t\mid \bm{X}^{\bm{x}_0}(t,\bm{w})\in \partial\mathcal{X}\vee \bm{X}^{\bm{x}_0}(t,\bm{w})\in \mathcal{X}_r\,\}$ is the first time of exit of $\bm{X}^{\bm{x}_0}(t,\bm{w})$ from the open set $\mathcal{X}\setminus\mathcal{X}_r$.
\end{proposition}

By relaxing \eqref{eq:ra_continuous}, the sufficient conditions for certifying lower and upper bounds of $\mathbb{P}\big(\exists \tau\in \mathbb{R}_{\geq 0}.  \bm{X}^{\bm{x}_0}(\tau,\bm{w}) \in \mathcal{X}_r\wedge \forall t \in [0, \tau). \bm{X}^{\bm{x}_0}(t,\bm{w}) \in \mathcal{X}\big), \forall \bm{x}_0\in \mathcal{X}_0$ are shown in Corollary \ref{prop:lower_ra_continuous} and \ref{prop:upper_ra_continuous}, respectively.

\begin{corollary}
\label{prop:lower_ra_continuous}
Given a bounded and open safe set $\mathcal{X}$, a target set $\mathcal{X}_r$ and an initial set $\mathcal{X}_0$, where $\mathcal{X}_0,\mathcal{X}_r\subseteq \mathcal{X}$, if there exist $v(\bm{x}) \in \mathcal{C}^2(\overline{\mathcal{X}})$ and $u(\bm{x}) \in \mathcal{C}^2(\overline{\mathcal{X}})$ such that,
\begin{equation}
    \label{eq:lower_ra_continuous_1}
   \begin{cases}
       v(\bm{x}) \geq \epsilon_1& \forall \bm{x} \in \mathcal{X}_0\\
       \widetilde{\mathcal{A}} v(\bm{x}) \geq 0&  \forall \bm{x} \in \overline{\mathcal{X}}\\
       v(\bm{x}) \leq 1_{\mathcal{X}_r}(\bm{x})+\widetilde{\mathcal{A}}u(\bm{x}) & \forall \bm{x} \in \overline{\mathcal{X}}
   \end{cases},
\end{equation}
which is equivalent to
\begin{equation}
    \label{eq:lower_ra_continuous_2}
   \begin{cases}
       v(\bm{x}) \geq \epsilon_1 & \forall \bm{x} \in \mathcal{X}_0\\
       \mathcal{A} v(\bm{x}) \geq 0 &\forall \bm{x} \in \mathcal{X}\setminus\mathcal{X}_r  \\
       \mathcal{A} u(\bm{x})-v(\bm{x}) \geq 0 & \forall \bm{x} \in \mathcal{X}\setminus\mathcal{X}_r  \\
       -v(\bm{x})\geq 0 & \forall \bm{x} \in \partial{\mathcal{X}}\\
       1-v(\bm{x})\geq 0 & \forall \bm{x} \in \mathcal{X}_r
   \end{cases},
\end{equation}
then 
\begin{equation*}
    \begin{split}
        &\mathbb{P}\big(\exists \tau \in \mathbb{R}_{\geq 0}.  \bm{X}^{\bm{x}_0}(\tau,\bm{w}) \in \mathcal{X}_r \wedge \forall t \in [0, \tau). \bm{X}^{\bm{x}_0}(t,\bm{w}) \in \mathcal{X} \big)\\
        &=\mathbb{P}\big(\exists \tau\in \mathbb{R}_{\geq 0}. \widetilde{\bm{X}}^{\bm{x}_0}(\tau,\bm{w}) \in \mathcal{X}_r\big)\geq \epsilon_1, \forall \bm{x}_0\in \mathcal{X}_0.
    \end{split}
\end{equation*}
\end{corollary}

\begin{corollary}
\label{prop:upper_ra_continuous}
Given a bounded and open safe set $\mathcal{X}$, a target set $\mathcal{X}_r$ and an initial set $\mathcal{X}_0$, where $\mathcal{X}_0,\mathcal{X}_r\subseteq \mathcal{X}$, if there exist $v(\bm{x}) \in \mathcal{C}^2(\overline{\mathcal{X}})$ and $u(\bm{x}) \in \mathcal{C}^2(\overline{\mathcal{X}})$ such that,
\begin{equation}
    \label{eq:upper_ra_continuous_1}
   \begin{cases}
       v(\bm{x}) \leq \epsilon_2 & \forall \bm{x} \in \mathcal{X}_0\\
       \widetilde{\mathcal{A}} v(\bm{x}) \leq 0 &  \forall \bm{x} \in \overline{\mathcal{X}}\\
       v(\bm{x}) \geq 1_{\mathcal{X}_r}(\bm{x})+\widetilde{\mathcal{A}}u(\bm{x}) &  \forall \bm{x} \in \overline{\mathcal{X}}
   \end{cases},
\end{equation}
which is equivalent to
\begin{equation}
    \label{eq:upper_ra_continuous_2}
   \begin{cases}
       v(\bm{x}) \leq \epsilon_2 & \forall \bm{x} \in \mathcal{X}_0\\
       \mathcal{A} v(\bm{x}) \leq 0 & \forall \bm{x} \in \mathcal{X}\setminus\mathcal{X}_r \\
       \mathcal{A} u(\bm{x}) - v(\bm{x}) \leq 0 &\forall \bm{x} \in \mathcal{X}\setminus\mathcal{X}_r \\
       -v(\bm{x})\leq 0, &\forall \bm{x} \in \partial{\mathcal{X}}\\
       1 - v(\bm{x})\leq 0 & \forall \bm{x} \in \mathcal{X}_r
   \end{cases},
\end{equation}
then \begin{equation*}
    \begin{split}
        &\mathbb{P}\big(\exists \tau \in \mathbb{R}_{\geq 0}.  \bm{X}^{\bm{x}_0}(\tau,\bm{w}) \in \mathcal{X}_r \wedge \forall t \in [0, \tau). \bm{X}^{\bm{x}_0}(t,\bm{w}) \in \mathcal{X} \big)\\
        &=\mathbb{P}\big(\exists \tau\in \mathbb{R}_{\geq 0}. \widetilde{\bm{X}}^{\bm{x}_0}(\tau,\bm{w}) \in \mathcal{X}_r\big)\leq \epsilon_2, \forall \bm{x}_0\in \mathcal{X}_0.
    \end{split}
\end{equation*}
\end{corollary}
\section{Safe Probabilistic Invariance Verification for Continuous-time Systems}
\label{sec:method_continuous}
In this section, we present optimizations to compute the the upper and lower bounds of the liveness probabilities of continuous-time systems. We first present an optimization for computing lower bounds of the liveness probabilities, which is adapted from the classical stochastic barrier certificate based on the Doob's nonnegetive supermartingale inequality. We then present alternative optimizations for lower- and upper-bounding the liveness probabilities, which are adapted from the conditions in Corollary \ref{prop:lower_ra_continuous} and \ref{prop:upper_ra_continuous}.

Similar to Proposition \ref{theo:ra_continuous}, all optimizations are based on a new stochastic process $\{\widehat{\bm{X}}^{\bm{x}_0}(t, \bm{w}),t\in \mathbb{R}_{\geq 0}\}$ for $\bm{x}_0 \in \overline{\mathcal{X}}$, which is a stopped process corresponding to $\{\bm{X}^{\bm{x}_0}(t, \bm{w}),t\in [0,T^{\bm{x}_0}(\bm{w}))\}$ and the set $\mathcal{X}$, i.e.,
\begin{equation}
\label{eq:switch_continuous}
\widehat{\bm{X}}^{\bm{x}_0}(t,\bm{w}))=
\begin{cases}
\bm{X}^{\bm{x}_0}(t,\bm{w}) &\text{if}\ t < \tau^{x_0}(\bm{w})\\
\bm{X}^{\bm{x}_0}(\tau^{x_0}(\bm{w}),\bm{w}) &\text{if}\ t \geq \tau^{x_0}(\bm{w})
\end{cases},
\end{equation}
where $\tau^{x_0}(\bm{w})=\inf \{\,t\mid \bm{X}^{\bm{x}_0}(t,\bm{w}))\in \partial\mathcal{X}\,\}$ is the first time of exit of $\bm{X}^{\bm{x}_0}(t,\bm{w})$ from the open set $\mathcal{X}$. Clearly, the set $\overline{\mathcal{X}}$ is an invariant set for $\widehat{\bm{X}}^{\bm{x}_0}(t, \bm{w})$ with $\bm{x}_0 \in \overline{\mathcal{X}}$. Moreover, the infinitesimal generator corresponding to $\widehat{\bm{X}}^{\bm{x}_0}(t,\bm{w})$ (denoted $\widehat{\mathcal{A}}$) is 
identical to the one corresponding to $\bm{X}^{\bm{x}_0}(t,\bm{w})$ on the set $\mathcal{X}$, and is equal to zero on the set $\partial\mathcal{X}$, i.e.,
\begin{equation*}
\widehat{\mathcal{A}}f(\bm{x})=
\begin{cases}
\mathcal{A}f(\bm{x}) & \text{if}\ \bm{x} \in \mathcal{X}\\
0 & \text{if}\ \bm{x} \in \partial\mathcal{X}
\end{cases}.\\
\end{equation*}

Next, we clarify the auxiliary role of $\widehat{\bm{X}}^{\bm{x}_0}(t, \bm{w})$ in the calculation of the liveness probability. Given an initial state $\bm{x}_0 \in \mathcal{X}_0$, we have $\bm{X}^{\bm{x}_0}(t, \bm{w}) = \widehat{\bm{X}}^{\bm{x}_0}(t, \bm{w}),\forall t \in [0, \tau]$. Thus, 
\begin{equation*}
\begin{split}
    &\mathbb{P}\big(\exists \tau \in \mathbb{R}_{\geq 0}.\bm{X}^{\bm{x}_0}(\tau, \bm{w}) \in \partial\mathcal{X} \wedge \forall t \in [0, \tau). \bm{X}^{\bm{x}_0}(t, \bm{w}) \in \mathcal{X}\big)\\
    =&\mathbb{P}\big(\exists \tau \in \mathbb{R}_{\geq 0}.\widehat{\bm{X}}^{\bm{x}_0}(\tau, \bm{w}) \in \partial\mathcal{X} \wedge \forall t \in [0, \tau). \widehat{\bm{X}}^{\bm{x}_0}(t, \bm{w}) \in \mathcal{X}\big)
\end{split}
\end{equation*}
and 
\begin{equation*}
    \mathbb{P}\big(\forall t \in \mathbb{R}_{\geq 0}.\bm{X}^{\bm{x}_0}(t, \bm{w}) \in \mathcal{X}\big) = \mathbb{P}\big(\forall t \in \mathbb{R}_{\geq 0}.\widehat{\bm{X}}^{\bm{x}_0}(t, \bm{w}) \in \mathcal{X}\big). 
\end{equation*}

Similarly, we can obtain the lower and upper bounds of the liveness probability of staying inside the safe set $\mathcal{X}$ by computing the upper and lower bounds of the exit probability $\mathbb{P}\big(\exists \tau \in \mathbb{R}_{\geq 0}.\widehat{\bm{X}}^{\bm{x}_0}(\tau, \bm{w}) \in \partial\mathcal{X} \wedge \forall t \in [0, \tau). \widehat{\bm{X}}^{\bm{x}_0}(t, \bm{w}) \in \mathcal{X}\big)$.


\subsection{Doob's Nonnegative Supermartingale Inequality Based Invariance Verification}
This subsection formulates a sufficient condition for lower-bounding the liveness probabilities in Definition \ref{def:ra_continuous}. This condition is the straightforward extension of the one in Theorem 15 in \cite{prajna2007framework} and the one in Proposition III.8 in \cite{wang2021safety}, which is built upon the well established Doob's nonnegative supermartingale inequality \cite{ville1939etude}. This condition is formulated here primarily for convenient comparisons throughout the remainder.  


\begin{proposition}[Theorem 15, \cite{prajna2007framework}]
\label{prop:origin_2007}
Given a bounded and open safe set $\mathcal{X}$ and an initial set $\mathcal{X}_0 \subseteq \mathcal{X}$, if there exist $v(\bm{x}) \in \mathcal{C}^2({\mathcal{X}})$, such that,
\begin{equation}
    \label{eq:lower_2007}
   \begin{cases}
       v(\bm{x}) \leq 1 - \epsilon_1 & \forall \bm{x} \in \mathcal{X}_0\\
       \mathcal{A} v(\bm{x}) \leq 0 &\forall \bm{x} \in \overline{\mathcal{X}} \\
       v(\bm{x}) \geq 0 &\forall \bm{x} \in \overline{\mathcal{X}} \\
       v(\bm{x})\geq 1 & \forall \bm{x} \in \mathcal{X}_{unsafe} (= \partial \mathcal{X})
   \end{cases},
\end{equation}
then $\mathbb{P}\big(\exists t\in \mathbb{R}_{\geq 0}. \bm{X}^{\bm{x}_0}(t,\bm{w}) \in \partial \mathcal{X} \big)
   \leq 1-\epsilon_1$ and thus $\mathbb{P}\big(\forall t\in \mathbb{R}_{\geq 0}. \bm{X}^{\bm{x}_0}(t,\bm{w}) \in \mathcal{X} \big) \geq 1 - v(\bm{x}_0) \geq \epsilon_1$, $\forall \bm{x}_0\in \mathcal{X}_0$.
\end{proposition}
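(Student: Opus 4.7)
The plan is to prove the bound via Doob's nonnegative supermartingale inequality applied to the real-valued process $\{v(\widehat{\bm{X}}^{\bm{x}_0}(t,\bm{w}))\}_{t\geq 0}$, where $\widehat{\bm{X}}^{\bm{x}_0}$ is the stopped process defined in \eqref{eq:switch_continuous}. Working with the stopped process is essential: it keeps the trajectory inside the compact set $\overline{\mathcal{X}}$ for all time, which is precisely where the conditions in \eqref{eq:lower_2007} are assumed, and it translates the ``first exit of $\bm{X}^{\bm{x}_0}$'' event into a statement about the supremum of $v$ along the stopped trajectory.

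First I would establish that $v(\widehat{\bm{X}}^{\bm{x}_0}(t,\bm{w}))$ is a nonnegative supermartingale. Nonnegativity is immediate from $v\geq 0$ on $\overline{\mathcal{X}}$ together with $\widehat{\bm{X}}^{\bm{x}_0}(t,\bm{w}) \in \overline{\mathcal{X}}$. For the supermartingale property, I would appeal to Dynkin's formula (Theorem \ref{theo:dynkin}): using the fact that $\widehat{\mathcal{A}}v(\bm{x}) = \mathcal{A}v(\bm{x}) \leq 0$ on $\mathcal{X}$ and $\widehat{\mathcal{A}}v(\bm{x}) = 0$ on $\partial\mathcal{X}$, for any bounded stopping time $\tau$,
\[
\mathbb{E}\bigl[v(\widehat{\bm{X}}^{\bm{x}_0}(\tau,\bm{w}))\bigr] = v(\bm{x}_0) + \mathbb{E}\Big[\int_0^\tau \widehat{\mathcal{A}}v(\widehat{\bm{X}}^{\bm{x}_0}(s,\bm{w}))\,ds\Big] \leq v(\bm{x}_0).
\]
Next I would apply Doob's nonnegative supermartingale inequality with threshold $\lambda=1$ and initial value $v(\bm{x}_0)\leq 1-\epsilon_1$ (the first line of \eqref{eq:lower_2007} together with $\bm{x}_0\in\mathcal{X}_0$) to get
\[
\mathbb{P}\Big(\sup_{t\geq 0} v(\widehat{\bm{X}}^{\bm{x}_0}(t,\bm{w})) \geq 1\Big) \leq v(\bm{x}_0) \leq 1-\epsilon_1.
\]

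To close the argument, I would note that the exit event $\{\exists t\in\mathbb{R}_{\geq 0}.\bm{X}^{\bm{x}_0}(t,\bm{w})\in\partial\mathcal{X}\}$ is exactly $\{\tau^{x_0}(\bm{w})<\infty\}$, and on this event $\widehat{\bm{X}}^{\bm{x}_0}(\tau^{x_0},\bm{w})\in\partial\mathcal{X}$, so the boundary condition $v\geq 1$ on $\partial\mathcal{X}$ forces $\sup_{t\geq 0} v(\widehat{\bm{X}}^{\bm{x}_0}(t,\bm{w}))\geq 1$. Chaining this inclusion of events with the Doob bound yields $\mathbb{P}(\exists t.\bm{X}^{\bm{x}_0}(t,\bm{w})\in\partial\mathcal{X}\mid \bm{x}_0\in\mathcal{X}_0)\leq 1-\epsilon_1$, and taking complements gives the desired liveness lower bound $\epsilon_1$.

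The main technical obstacle will be the regularity requirements of Dynkin's formula as stated, namely compact support of $v$ and a stopping time of finite expectation. My remedy is a standard twofold localization: since $\overline{\mathcal{X}}$ is compact, extend $v$ by a $\mathcal{C}^2$ cutoff to have compact support on $\mathbb{R}^n$ while agreeing with $v$ on a neighborhood of $\overline{\mathcal{X}}$ — the stopped trajectory never sees the extension; then apply Dynkin at the bounded times $\tau\wedge t$ and pass $t\to\infty$ using Fatou on $v(\widehat{\bm{X}}^{\bm{x}_0}(\tau\wedge t,\bm{w}))$ and monotone convergence on $\int_0^{\tau\wedge t}(-\widehat{\mathcal{A}}v)\,ds$, which is exactly the localization under which Doob's inequality extends to unbounded horizons.
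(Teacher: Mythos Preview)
Your proof is correct and follows precisely the route the paper attributes to this result: the paper does not include a proof of Proposition~\ref{prop:origin_2007} but cites it as Theorem~15 of \cite{prajna2007framework} and explicitly notes that it ``is built upon the well established Doob's nonnegative supermartingale inequality.'' Your argument---pass to the stopped process $\widehat{\bm{X}}^{\bm{x}_0}$, use $\widehat{\mathcal{A}}v\le 0$ and Dynkin's formula to obtain the supermartingale property, then apply Doob's maximal inequality at level $1$ and couple it with the boundary condition $v\ge 1$ on $\partial\mathcal{X}$---is exactly that approach, and your handling of the localization issues (compact-support extension of $v$, truncating the stopping time and passing to the limit via Fatou/monotone convergence) is the standard and appropriate remedy.
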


According to Proposition \ref{prop:origin_2007}, a lower bound of the liveness probabilities can be computed via solving \textbf{Op5} (shown in Appendix \ref{sec:appendix_opt}).

To the best of our knowledge, there are no barrier-like conditions based on the Doob's nonnegative supermartingale inequality, similar to the one stated in Theorem 16 of \cite{anand2022k} for discrete-time systems, that have been developed in previous studies to examine upper bounds of the reachability probabilities. Moreover, it is not possible to adapt the constraint $ \mathbb{E}^{\infty}[v(\bm{\phi}_{\pi}^{\bm{x}}(1))]-v(\bm{x}) \leq -\delta, \forall \bm{x} \in \mathcal{X}$ from equation \eqref{super_reach} to $ \mathcal{A}v(\bm{x}) \leq -\delta, \forall \bm{x} \in \mathcal{X}$ in order to compute upper bounds for the liveness probabilities. If we consider that $ \mathcal{A}v(\bm{x}) \leq -\delta, \forall \bm{x} \in \mathcal{X}$ is true, then $\widehat{\mathcal{A}}v(\bm{x}) \leq -\delta, \forall \bm{x} \in \mathcal{X}$ will also hold, which implies $\widehat{\mathcal{A}}v(\bm{x}) \leq -\delta, \forall \bm{x} \in \overline{\mathcal{X}}$. However, this contradicts the condition $\widehat{\mathcal{A}}v(\bm{x})=0, \forall \bm{x} \in \partial \mathcal{X}$.

\subsection{Equations Relaxation Based Invariance Verification}
In this subsection, we present the second set of optimizations for addressing the safe probabilistic invariance verification problem in Definition \ref{def:ra_continuous}. We begin by introducing an equation that characterizes the liveness probability of satying the safe set $\mathcal{X}$ for all the time. This equation is adapted from \eqref{eq:ra_continuous}, where the unsafe set $\partial\mathcal{X}$ is regarded as the target set $\mathcal{X}_r$ in \eqref{eq:ra_continuous}. We then propose two sufficient conditions for certifying lower and upper bounds of the liveness probabilities by relaxing the derived equation.

\begin{lemma}
\label{theo:continuous}
Given a bounded safe set $\mathcal{X}$, if there exist $v(\bm{x}) \in \mathcal{C}^2(\overline{\mathcal{X}})$ and $u(\bm{x}) \in \mathcal{C}^2(\overline{\mathcal{X}})$ such that for $\bm{x} \in \overline{\mathcal{X}}$,
\begin{equation}
    \label{eq:continuous}
   \begin{cases}
       \widehat{\mathcal{A}} v(\bm{x}) = 0\\
       v(\bm{x})=1_{\partial\mathcal{X}}(\bm{x})+\widehat{\mathcal{A}}u(\bm{x})
   \end{cases},
\end{equation}
then
\begin{equation*}\begin{split}
   v(\bm{x}_0) &= \mathbb{P}\big(\exists t\in \mathbb{R}_{\geq 0}. \bm{X}^{\bm{x}_0}(t,\bm{w}) \in \partial\mathcal{X} \big)\\
   &= \mathbb{P}\big(\exists t\in \mathbb{R}_{\geq 0}. \widehat{\bm{X}}^{\bm{x}_0}(t,\bm{w}) \in \partial\mathcal{X} \big)\\
   &=\lim_{t\rightarrow \infty}\frac{\mathbb{E}[\int_0^t 1_{\partial\mathcal{X}}(\widehat{\bm{X}}^{\bm{x}_0}(t,\bm{w}))d\tau]}{t}, \forall \bm{x}_0\in \mathcal{X}.
   \end{split}
   \end{equation*}
    Thereby, $\mathbb{P}\big(\forall t\in \mathbb{R}_{\geq 0}. \bm{X}^{\bm{x}_0}(t,\bm{w}) \in \mathcal{X}\big) = 1 - v(\bm{x}_0)$, $\forall \bm{x}_0\in \mathcal{X}$.
\end{lemma}
\begin{proof}
The conclusion can be assured by following the proof of Theorem 2 in \cite{xue2023reach}.
\end{proof}

Similar to the Section \ref{SIV}, two sufficient conditions can be obtained for certifying lower and upper bounds of the liveness probabilities via directly relaxing equation \eqref{eq:continuous}, respectively.

\begin{theorem}
\label{prop:lower_continuous}
Given a bounded and open safe set $\mathcal{X}$ and an initial set $\mathcal{X}_0$, where $\mathcal{X}_0 \subseteq \mathcal{X}$, if there exist $v(\bm{x}) \in \mathcal{C}^2(\overline{\mathcal{X}})$ and $u(\bm{x}) \in \mathcal{C}^2(\overline{\mathcal{X}})$ such that
\begin{equation}
    \label{eq:lower_continuous}
   \begin{cases}
       v(\bm{x}) \leq 1 - \epsilon_1 &\forall \bm{x} \in \mathcal{X}_0 \\
       \mathcal{A} v(\bm{x}) \leq 0 &\forall \bm{x} \in \mathcal{X}\\
       v(\bm{x}) \geq \mathcal{A} u(\bm{x}) & \forall \bm{x} \in \mathcal{X}\\
       v(\bm{x})\geq 1 & \forall \bm{x} \in \partial{\mathcal{X}}
   \end{cases},
\end{equation}
then $\mathbb{P}\big(\exists t\in \mathbb{R}_{\geq 0}. \bm{X}^{\bm{x}_0}(t,\bm{w}) \in \partial\mathcal{X} \big)\leq v(\bm{x}_0) \leq 
 1- \epsilon_1$, $\forall \bm{x}_0\in \mathcal{X}_0$. Hence, $\mathbb{P}\big(\forall t\in \mathbb{R}_{\geq 0}. \bm{X}^{\bm{x}_0}(t,\bm{w}) \in \mathcal{X} \big)\geq \epsilon_1$, $\forall \bm{x}_0\in \mathcal{X}_0$.
\end{theorem}
\begin{proof}
The conclusion can be assured by following the proof of Corollary 2 in \cite{xue2023reach}.
\end{proof}

Similarly, we can show that constraints \eqref{eq:lower_2007} and \eqref{eq:lower_continuous} are equivalent. 

\begin{proposition}
    \label{coro:continuous}
    Constraints \eqref{eq:lower_2007} and \eqref{eq:lower_continuous} are equivalent.
\end{proposition}
\begin{proof}
    The proof is similar to Proposition \ref{coro:discrete}, which is shown in Appendix \ref{sec:appendix_proof}.
\end{proof}

According to Theorem \ref{prop:lower_continuous}, a lower bound of the liveness probabilities can be computed via solving \textbf{Op6} (shown in Appendix \ref{sec:appendix_opt}).

The sufficient condition for certifying upper bounds of the liveness probabilities is presented in Theorem \ref{prop:upper_continuous}.
\begin{theorem}
\label{prop:upper_continuous}
Given a bounded and open safe set $\mathcal{X}$ and an initial set $\mathcal{X}_0$, where $\mathcal{X}_0 \subseteq \mathcal{X}$, if there exist $v(\bm{x}) \in \mathcal{C}^2(\overline{\mathcal{X}})$ and $u(\bm{x}) \in \mathcal{C}^2(\overline{\mathcal{X}})$ such that,
\begin{equation}
    \label{eq:upper_continuous}
   \begin{cases}
       v(\bm{x}) \geq 1 - \epsilon_2 & \forall \bm{x} \in \mathcal{X}_0\\
       \mathcal{A} v(\bm{x}) \geq 0 & \forall \bm{x} \in \mathcal{X} \\
       v(\bm{x}) \leq \mathcal{A} u(\bm{x}) & \forall \bm{x} \in \mathcal{X} \\
       v(\bm{x})\leq 1 & \forall \bm{x} \in \partial{\mathcal{X}}
   \end{cases},
\end{equation}
then $\mathbb{P}\big(\exists t\in \mathbb{R}_{\geq 0}. \bm{X}^{\bm{x}_0}(t,\bm{w}) \in \partial\mathcal{X} \big)\geq v(\bm{x}_0) \geq 
 1- \epsilon_2$, $\forall \bm{x}_0\in \mathcal{X}_0$. Hence, $\mathbb{P}\big(\forall t\in \mathbb{R}_{\geq 0}. \bm{X}^{\bm{x}_0}(t,\bm{w}) \in \mathcal{X} \big)\leq \epsilon_2$, $\forall \bm{x}_0\in \mathcal{X}_0$.
\end{theorem}
\begin{proof}
The conclusion can be assured by following the proof of Corollary 1 in \cite{xue2023reach}.
\end{proof}

According to Theorem \ref{prop:upper_continuous}, an upper bound of the liveness probabilities can be computed via solving \textbf{Op7} (shown in Appendix \ref{sec:appendix_opt}).
\section{Examples}
\label{sec:example}
In this section we demonstrate our theoretical developments on five examples. Since directly solving the problems \textbf{Op0}-\textbf{Op8} is challenging, we relax them into semi-definite programs (\textbf{SDP0}-\textbf{SDP8}, see Appendix \ref{sec:appendix_sdp}) using the sum of squares decomposition for multivariate polynomials, which allows for efficient solving. \oomit{Nonetheless, this relaxation may yield unexpected outcomes.} We solve the resulting semi-definite programs using the tool Mosek 10.1.21 \cite{aps2019mosek}. Furthermore, to ensure numerical stability during the solution of the semi-definite programs, we impose a constraint on the coefficients of the unknown polynomials ($v(\bm{x}),w(\bm{x}),u(\bm{x}),p(\bm{x}), s_i(\bm{x}),i=0,\ldots,4$) in \textbf{SDP0}-\textbf{SDP8} and restrict them to the interval $[-100,100]$.

In addition, in the experiments we employ the Monte Carlo method to approximate the upper and lower bounds of the liveness probabilities. Specifically, we take a large number of samples (e.g., $10^4$) from the initial set and simulate the liveness probabilities of these samples over an extended time. The maximum probability among these samples is considered as the upper bound, while the minimum probability is considered as the lower bound. By using these bounds as ground truth values, we can assess the accuracy of the bounds computed via solving \textbf{SDP0}-\textbf{SDP8}. 




\begin{table*}[h]
    \centering
    \begin{tabular}{|c|c|c|c|c|c|c|c|}\hline 
           d &2&6&10&14&18&22&26  \\\hline
        $\epsilon_1$ by Op3 &0.3574& 0.6678& 0.6917& 0.7368& 0.7575& 0.7622& 0.7647\\\hline 
$\epsilon_2$ by Op4 &1.0000& 0.9505& 0.9488& 0.9242& 0.8991& 0.8927& 0.8771\\\hline
        $\epsilon_1$ by Op1 &0.3574& 0.6678& 0.6917& 0.7368& 0.7575& 0.7622& 0.7647 \\\hline 
        $\epsilon_1$ by Op0 &0.3574& 0.5895& 0.5937& 0.6740& 0.6867& 0.7007& 0.7284 \\\hline 
    \end{tabular}
    \caption{Computed lower and upper bounds of the liveness probability in Example \ref{ex:one_dim}\\
    ($d$ denotes the degree of unknown polynomials involved in the resulting SDPs)}
    \label{tab:my_label}
\end{table*}

\subsection{Discrete-time Systems}
\begin{example}
\label{ex:one_dim}
Consider the one-dimensional discrete-time system:
\begin{equation*}
x(l+1)=(-0.5+d(l))x(l),
\end{equation*}
where $d(\cdot)\colon\mathbb{N}\rightarrow \mathcal{D}=[-1,1]$, $\mathcal{X}=\{\,x\mid h(x)\leq 0\,\}$ with $h(x)=x^2-1$, and $\mathcal{X}_0=\{\,x\mid (x+0.8)^2=0\,\}$. Besides, we assume that the probability distribution on $\mathcal{D}$ is the uniform distribution. The lower and upper bounds of the liveness probability obtained by Monte Carlo methods are $\epsilon_1 = \epsilon_2 = 0.8312$.
\end{example}

The set $\widehat{\mathcal{X}}=\{\,x\mid \widehat{h}(x)\leq 0\,\}$ with $\widehat{h}(x)=x^2-2.25$ is used in  solving \textbf{SDP1}-\textbf{SDP4}.  The computed lower and upper bounds are summarized in Table \ref{tab:my_label}\oomit{ and Fig. \ref{fig_1}}. It is concluded that tighter lower and upper bounds of the liveness probability can be obtained when polynomials of higher degree are used for performing computations. 
Given that the liveness probability is strictly greater than zero, as stated in Remark \ref{remark_bounded}, the use of \textbf{SDP2} for computing upper bounds on the liveness probability is precluded. This observation aligns with our experimental findings. Instead, upper bounds for the liveness probability can be obtained by solving \textbf{SDP4}, which closely approximate the results obtained from Monte-Carlo methods as the degree of polynomials increases. Concurrently, we note a noteworthy consistency in the identical lower bounds computed from \textbf{SDP1} and \textbf{SDP3}, with those derived from \textbf{SDP0} being the most conservative. The identical lower bounds computed from \textbf{SDP1} and \textbf{SDP3} also justify the validity of the conclusion in Proposition \ref{coro:discrete}.


\begin{example}
\label{lvm}
Consider the discrete-time Lotka-Volterra model:
\begin{equation}
\label{volterra}
    \begin{cases}
x(l+1)=rx(l)-ay(l)x(l),\\
y(l+1)=sy(l)+acy(l)x(l),
\end{cases}
\end{equation}
where $r=0.5$, $a=1$, $s=-0.5+d(l)$ with $d(\cdot)\colon\mathbb{N}\rightarrow \mathcal{D}=[-1,1]$ and $c=1$, $\mathcal{X}=\{\,(x,y)^{\top}\mid h(\bm{x})\leq 0\,\}$ with $h(\bm{x})=x^2+y^2-1$, and $\mathcal{X}_0=\{\,(x,y)^{\top}\mid g(\bm{x})\leq 0\,\}$ with $g(\bm{x})=(x+0.6)^2+(y+0.6)^2-0.01$. Besides, we assume that the probability distribution imposed on $\mathcal{D}$ is the uniform distribution.
\end{example}

The set $\widehat{\mathcal{X}}=\{\,(x,y)^{\top}\mid \widehat{h}(\bm{x})\leq 0\,\}$ with $\widehat{h}(\bm{x})=x^2+y^2-4$ is used in solving \textbf{SDP1}-\textbf{SDP4}. The lower and upper bounds of the liveness probabilities obtained by Monte Carlo methods are $\epsilon_1 = 0.3651$ and $\epsilon_2 = 0.7974$, respectively. Five trajectories starting from $(-0.6,-0.6)^{\top}$ are visualized in Fig. \ref{fig_2}. The computed lower and upper bounds are presented in Table \ref{tab2}. Analogous to Example \ref{ex:one_dim}, \textbf{SDP2} cannot be used to compute upper bounds of the liveness probabilities. Instead, upper bounds can be obtained by solving \textbf{SDP4}, which closely approximates the results obtained from Monte Carlo methods as the degree of polynomials increases. In addition, the lower bounds obtained by solving \textbf{SDP3} and \textbf{SDP1} are almost equal and tighter than those obtained by solving \textbf{SDP0}. Fig. \ref{fig:ex_lvm_v} depicts a comparison of the computed $v(\bm{x})$ when employing polynomials of degree 18. The $v(\bm{x})$ obtained from \textbf{SDP3} is greater than or equal to zero for every $\bm{x} \in \mathcal{X}$ and closely resembles the $v(\bm{x})$ obtained from \textbf{SDP1}. Unlike \textbf{Op1} and \textbf{Op3}, which impose a non-negative requirement on $v(\bm{x})$ for each $\bm{x}\in \widehat{\mathcal{X}}$, \textbf{Op4} does not have this requirement and allows $v(\bm{x})$ to be negative for some $\bm{x}\in \widehat{\mathcal{X}}$. This is illustrated in Fig. \ref{fig:ex_lvm_v}.   


\begin{figure}
    \centering
    \includegraphics[width=0.9\textwidth]{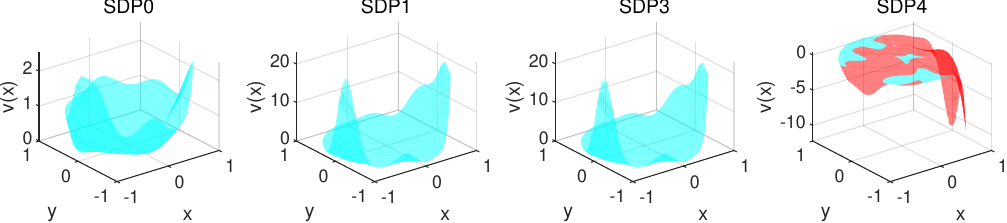}
    \caption{Illustration of computed $v(\bm{x})$ on $\mathcal{X}$ when $d=18$ in Example \ref{lvm}. Red region denotes $v(\bm{x})<0$, while blue region denotes $v(\bm{x})\geq 0$.}
    \label{fig:ex_lvm_v}
\end{figure}
   

\begin{table}[h]
    \centering
    \begin{tabular}{|c|c|c|c|c|c|c|}\hline 
           d &$8$&$10$&12 &14 & 16 & 18  \\\hline
        $\epsilon_1$ by Op3 &0.1487 &0.1865 &0.2039 &0.2147 &0.2203 &0.2261\\\hline 
        $\epsilon_2$ by Op4 &0.9064 &0.8670 &0.8501 &0.8320 &0.8272 &0.8171\\\hline
        $\epsilon_1$ by Op1 &0.1487 &0.1865 &0.2039 &0.2147 &0.2203 &0.2260\\\hline 
        $\epsilon_1$ by Op0 &0.0000 &0.0000 &0.0000 &0.0000 &0.1756 &0.1976\\\hline 
    \end{tabular}
    \caption{Computed lower and upper bounds of the liveness probabilities in Example \ref{lvm} ($d$ denote the degree of unknown polynomials involved in the resulting SDPs)}
    \label{tab2}
\end{table}

    \begin{figure}
    \centering
    \begin{subfigure}[b]{0.24\textwidth}
   \includegraphics[width=\textwidth, trim=55 0 70 23, clip]{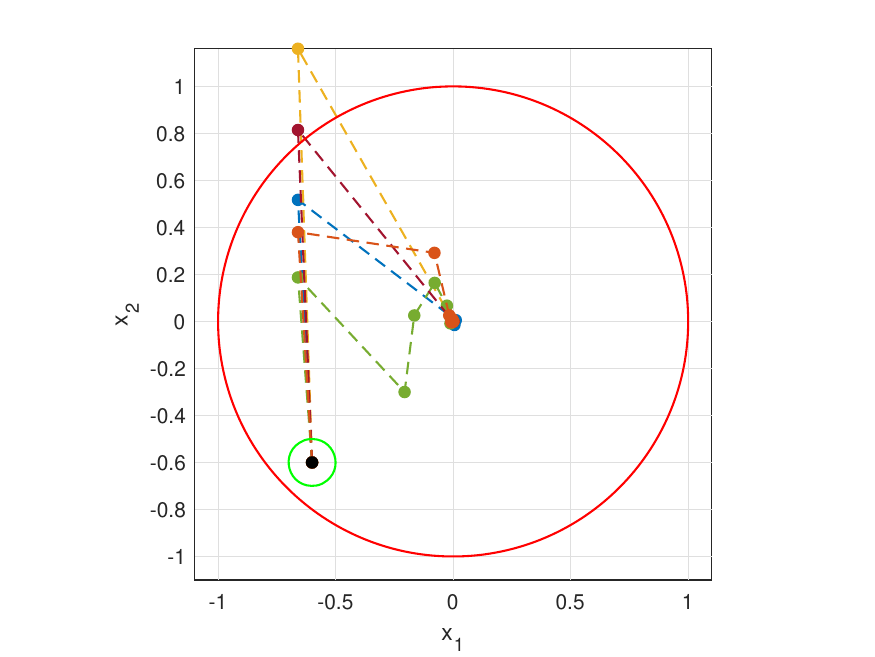}
    \caption{Trajectories for system \eqref{volterra} starting from $(-0.6,-0.6)^{\top}$.}
      \label{fig_2}
    \end{subfigure}
    \hfill
    \begin{subfigure}[b]{0.24\textwidth}
    \centering
    \includegraphics[width=\textwidth, trim=55 0 70 23, clip]{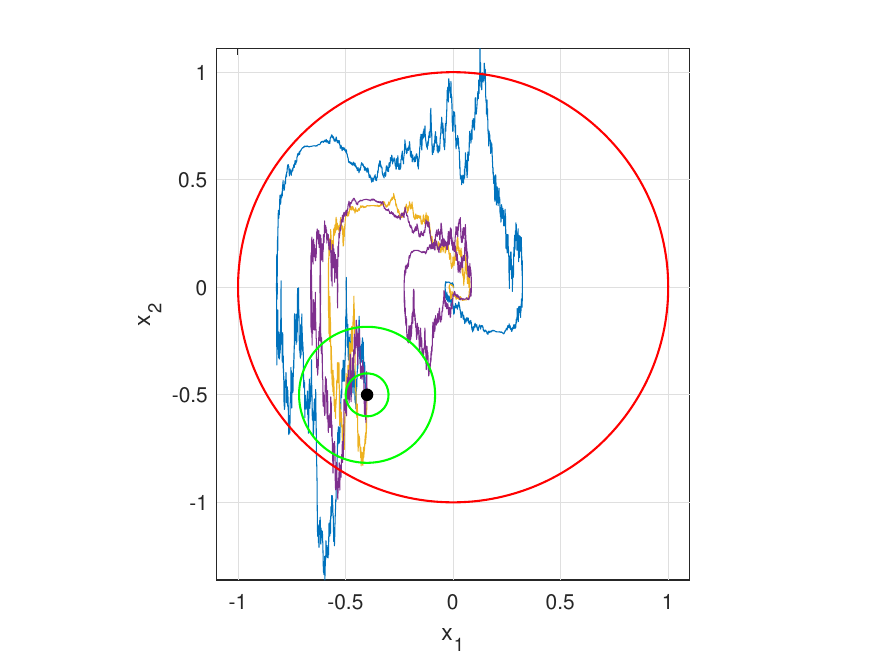}
    \caption{Trajectories for system \eqref{eq:drift} starting from $(-0.4,0.5)^{\top}$.}
      \label{fig:drift}
    \end{subfigure}
    \hfill
    \begin{subfigure}[b]{0.49\textwidth}
            \centering
    \includegraphics[width=\textwidth, trim=35 0 37 20, clip]{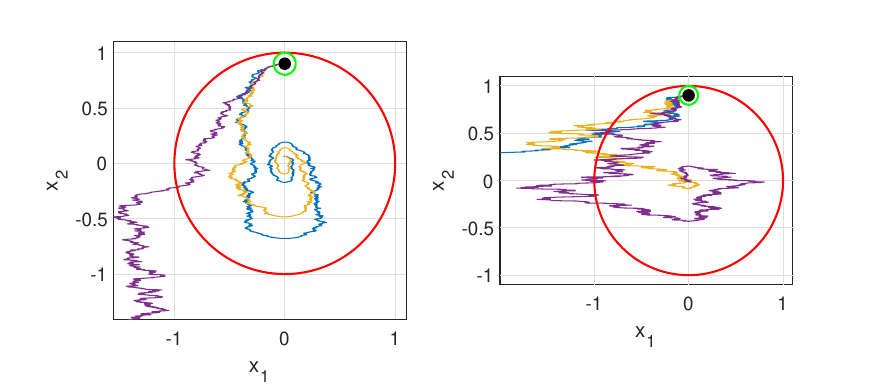}
    \caption{Trajectories for system \eqref{eq:tan} starting from $(0,0.9)^{\top}$ with $\sigma=0.5$ in the left and $\sigma=2$ in the right.}
    \label{fig:tan}
    \end{subfigure}
    \caption{Illustrations of trajectories for systems (red curve - $\partial \mathcal{X}$; green curve - $\partial \mathcal{X}_0$).}
  \end{figure}


\oomit{Using higher-degree polynomials (i.e., $v(\bm{x}), w(\bm{x})$) can lead to tighter bounds on the liveness probability when solving problems \textbf{Op0}, \textbf{Op1}, \textbf{Op3}, and \textbf{Op4}. Moreover, while our theoretical analysis demonstrates that the constraint in \textbf{Op3} is weaker than that in \textbf{Op1}, we cannot guarantee that \textbf{SDP3} is weaker than \textbf{SDP1}. Hence, it is uncertain whether the lower bound obtained by solving \textbf{SDP3} is tighter than that obtained by solving \textbf{SDP1}. }

\subsection{Continuous-time Systems}
\begin{example}[Nonlinear drift]
    \label{ex:drift}
    Consider the stochastic differential equation that is adapted from \cite{prajna2004stochastic},
    \begin{equation}
        \label{eq:drift}
        \begin{cases}
            dX_1(t,w) = X_2(t,w)dt,\\
            dX_2(t,w) = -(X_1(t,w) + X_2(t,w) + 0.5X_1^3(t,w))dt\\
            \qquad\qquad\quad+(X_1(t,w) + X_2(t,w))dW(t,w)
        \end{cases},
    \end{equation}
  where the safe set is $\mathcal{X} = \{\,(x_1,x_2)^\top \mid h(\bm{x})<0\,\}$ with $h(\bm{x})= x_1^2+x_2^2-1$.  
  \end{example}
  In this example, we consider two different initial sets: $\mathcal{X}_0 = \{\,(x_1,x_2)^\top \mid g_1(\bm{x})<0\,\}$ with $g_1(\bm{x})= 100(x_1+0.4)^2+100(x_2+0.5)^2-1$ and $\mathcal{X}_0^\prime = \{\,(x_1,x_2)^\top \mid g_2(\bm{x})<0\}$ with $g_2(\bm{x})= 10(x_1+0.4)^2+10(x_2+0.5)^2-1$.  The lower and upper bounds of the liveness probabilities obtained through Monte Carlo methods are $\epsilon_1 = 0.5338$ and $\epsilon_2 = 0.7101$, respectively, when adopting $\mathcal{X}_0$. Conversely, when adopting $\mathcal{X}_0^\prime$, the lower and upper bounds are $\epsilon_1 = 0.2114$ and $\epsilon_2 = 0.8976$, respectively. Three trajectories starting from $(-0.4,0.5)^{\top}$ are visualized in Fig. \ref{fig:drift}.

 The lower and upper bounds computed by solving \textbf{SDP5} through \textbf{SDP7} are summarized in Table \ref{tab:drift1} and Table \ref{tab:drift2}. The results demonstrate that the lower bounds of the liveness probabilities obtained from \textbf{SDP5} and \textbf{SDP6} are equal, regardless of the initial sets. This aligns with the conclusion stated in Proposition \ref{coro:continuous}. Meanwhile, the upper bounds can be computed by solving \textbf{SDP7}, which are close to the ones from Monte Carlo methods. Furthermore, due to the inclusion relationship of $\mathcal{X}_0 \subset \mathcal{X}_0^\prime$,  the lower bounds using $\mathcal{X}_0^\prime$ are less than or equal to the lower bounds obtained with $\mathcal{X}_0$, while the upper bounds using $\mathcal{X}_0^\prime$ are greater than or equal to the upper bounds computed using $\mathcal{X}_0$. These findings are consistent with our experimental results, as presented in Table \ref{tab:drift1} and Table \ref{tab:drift2}.


  \begin{table}[h]
    \centering
    \begin{tabular}{|c|c|c|c|c|c|c|c|}\hline 
           d &4&6&8&10&12&14&16  \\\hline
        $\epsilon_1$ by Op6 &0.3957 & 0.4217 & 0.4590 &0.4660 &0.4675 &0.4682 &0.4686\\\hline 
$\epsilon_2$ by Op7 &0.7313 &0.7279 &0.7233 &0.7224 & 0.7216 & 0.7213 &0.7208\\\hline
        $\epsilon_1$ by Op5 & 0.3957 & 0.4217 & 0.4590 &0.4660 &0.4675 &0.4682 &0.4686 \\\hline 
    \end{tabular}
    \caption{Computed lower and upper bounds of the liveness probabilities in Example \ref{ex:drift} using $\mathcal{X}_0$ ($d$ denotes the degree of unknown polynomials involved in the SDPs)}
    \label{tab:drift1}
\end{table}

  \begin{table}[h]
    \centering
    \begin{tabular}{|c|c|c|c|c|c|c|c|}\hline 
           d &4&6&8&10&12&14&16  \\\hline
        $\epsilon_1$ by Op6 &0.0596 & 0.0721 & 0.0873 &0.0913 &0.0919 &0.0923 &0.0924\\\hline 
$\epsilon_2$ by Op7 &0.9046 &0.9032 &0.9011 &0.9007 & 0.9004 & 0.9002 &0.9000\\\hline
        $\epsilon_1$ by Op5 &0.0596 & 0.0721 & 0.0873 &0.0913 &0.0919 &0.0923 &0.0924 \\\hline 
    \end{tabular}
    \caption{Computed lower and upper bounds of the liveness probabilities in Example \ref{ex:drift} using $\mathcal{X}_0^\prime$ ($d$ denotes the degree of unknown polynomials involved in the SDPs)}
    \label{tab:drift2}
\end{table}

\begin{example}
    \label{ex:tan}
    Consider the following nonlinear stochastic system from \cite{tan2008stability},
    \begin{equation}
        \label{eq:tan}
        \begin{cases}
            dX_1(t,w) = &\big(-0.42X_1(t,w)-1.05X_2(t,w)- 2.3X_1^2(t,w)  \\
            &- 0.56X_1(t,w)X_2(t,w)- X_1^3(t,w)\big)dt + \sigma X_1(t,w) dW(t,w)\\
            dX_2(t,w) = & (1.98X_1(t,w) + X_1(t,w)X_2(t,w))dt
        \end{cases},
    \end{equation}
    where the safe set is $\mathcal{X} = \{\,(x_1,x_2)^\top \mid h(\bm{x})<0\,\}$ with $h(\bm{x})= x_1^2+x_2^2-1$ and the initial set is $\mathcal{X}_0 = \{\,(x_1,x_2)^\top \mid g(\bm{x})<0\,\}$ with $g(\bm{x})= x_1^2+(x_2-0.9)^2-0.01$.
    \end{example}
In this example, we consider different values of $\sigma$: $\sigma = 0.5$ and $\sigma = 2$. When $\sigma = 0.5$, the lower and upper bounds of the liveness probabilities obtained through Monte Carlo methods are $\epsilon_1=0.0000$ and $\epsilon_2=0.4398$, respectively; when $\sigma = 2$, $\epsilon_1=0.0000$ and $\epsilon_2=0.3580$. For these two scenarios, the lower and upper bounds of the liveness probabilities computed by solving \textbf{SDP5} to \textbf{SDP7} are summarized in Table \ref{tab:tan1} and Table \ref{tab:tan2}, respectively. Additionally, Fig. \ref{fig:tan} visualizes three trajectories starting from $(0,0.9)^{\top}$.

    Regardless of the values of $\sigma$, the lower bounds obtained from \textbf{SDP5} and \textbf{SDP6} are zero. Meanwhile, the upper bounds of the liveness probabilities, computed by solving \textbf{SDP7}, vary for these two values of $\sigma$. It is observed that as the degree $d$ of polynomials used  for computations increases, the upper bounds decrease and approach the one obtained from Monte Carlo methods.
    

  \begin{table}[h]
    \centering
    \begin{tabular}{|c|c|c|c|c|c|c|c|}\hline 
           d &18&20&22&24&26&28&30  \\\hline
        $\epsilon_1$ by Op6&0.0000 &0.0000 & 0.0000 & 0.0000 &0.0000 &0.0000 &0.0000\\\hline 
$\epsilon_2$ by Op7 &0.5829 &0.5768 &0.5709 &0.5654 &0.5604 & 0.5574 &0.5560\\\hline
        $\epsilon_1$ by Op5 &0.0000 &0.0000 & 0.0000 & 0.0000 &0.0000 &0.0000 &0.0000\\\hline 
    \end{tabular}
    \caption{Computed lower and upper bounds of the liveness probabilities in Example \ref{ex:tan} with $\sigma = 0.5$ ($d$ denotes the degree of unknown polynomials involved in the SDPs)}
    \label{tab:tan1}
\end{table}

  \begin{table}[h]
    \centering
    \begin{tabular}{|c|c|c|c|c|c|c|c|}\hline 
           d &18&20&22&24&26&28&30  \\\hline
        $\epsilon_1$ by Op6 &0.0000 &0.0000 & 0.0000 & 0.0000 &0.0000 &0.0000 &0.0000\\\hline 
$\epsilon_2$ by Op 7&0.3731 &0.3723 &0.3718 &0.3715 &0.3711 &0.3709 & 0.3707\\\hline
        $\epsilon_1$ by Op5 &0.0000 &0.0000 & 0.0000 & 0.0000 &0.0000 &0.0000 &0.0000\\\hline 
    \end{tabular}
    \caption{Computed lower and upper bounds of the liveness probabilities in Example \ref{ex:tan} with $\sigma = 2$ ($d$ denotes the degree of unknown polynomials involved in the SDPs)}
    \label{tab:tan2}
\end{table}

\begin{example}
    \label{ex:vanderpol}
    Consider a 3D VanderPol oscillator adapted from \cite{korda2014controller},
    \begin{equation*}
        \begin{cases}
            dX_1(t,w) = & -2X_2(t,w)dt + X_1(t,w) dW(t,w)\\
            dX_2(t,w) = & \big(0.8X_1(t,w)-2.1X_2(t,w)+X_3(t,w)+10X_1^2(t,w)X_2(t,w)\big)dt\\
            &+ X_2(t,w) dW(t,w)\\
            dX_3(t,w) = & \big(-X_3(t,w)+X_3^3(t,w)\big)dt + X_3(t,w) dW(t,w)\\
        \end{cases},
    \end{equation*}
    where the safe set is $\mathcal{X} = \{\,(x_1,x_2,x_3)^\top \mid h(\bm{x})<0\,\}$ with $h(\bm{x})= x_1^2+x_2^2+x_3^3-1$ and the initial set is $\mathcal{X}_0 = \{\,(x_1,x_2,x_3)^\top \mid g(\bm{x})<0\,\}$ with $g(\bm{x})= (x_1-0.3)^2+(x_2+0.2)^2 + (x_3-0.2)^2-0.001$.
\end{example}

In this example, the lower bound and upper bound of the liveness probabilities obtained through Monte Carlo methods are $\epsilon_1=0.7363$ and $\epsilon_2=0.8150$, respectively. The lower and upper bounds of the liveness probabilities computed by solving \textbf{SDP5} to \textbf{SDP7} are summarized in Table \ref{tab:vanderpol}. It is observed that as the degree $d$ of polynomials used  for computations increases, the lower and upper bounds approach the ones obtained from Monte Carlo methods.

\begin{table}[h]
    \centering
    \begin{tabular}{|c|c|c|c|c|c|c|c|}\hline 
           d &4&6&8&10&12&14&16  \\\hline
        $\epsilon_1$ by Op6 &0.5810 & 0.6861 & 0.7027 &0.7112 &0.7162 &0.7178 &0.7188\\\hline 
        $\epsilon_2$ by Op7 &0.9598 & 0.8496 & 0.8336 &0.8266 & 0.8238 & 0.8223 &0.8218\\\hline
        $\epsilon_1$ by Op5 &0.5810 & 0.6861 & 0.7027 &0.7112 &0.7162 &0.7178 &0.7188 \\\hline 
    \end{tabular}
    \caption{Computed lower and upper bounds of the liveness probabilities in Example \ref{ex:vanderpol}  ($d$ denotes the degree of unknown polynomials involved in the SDPs)}
    \label{tab:vanderpol}
\end{table}

\section{Conclusion}
\label{sec:conclusion}
This paper introduced several optimizations aimed at addressing the safe probabilistic invariance verification problem in both stochastic discrete-time and continuous-time systems. Specifically, the goal is to compute lower and upper bounds of the liveness probabilities concerning a safe set and an initial set of states. These optimizations were constructed via either using the Doob’s nonnegative supermartingale inequality-based method or relaxing the equations that characterize exact reach-avoid probabilities. The effectiveness and comparisons of these optimizations were thoroughly evaluated through both theoretical and numerical analyses.

\bibliographystyle{siamplain}
\bibliography{references}
\appendix
\section{The proof of Proposition \ref{coro:continuous}}
\label{sec:appendix_proof}
\begin{proof}
    We only need to prove that $v(\bm{x}) \geq 0$ for $\bm{x} \in \overline{\mathcal{X}}$ if $v(\bm{x})$ satisfies \eqref{eq:lower_continuous}.
    
    Assume $v(\bm{x})$ satisfies \eqref{eq:lower_continuous} and there exists $\bm{x}_0 \in \overline{\mathcal{X}}$ such that $v(\bm{x_0}) < -\delta$, where $\delta >0$. According to $\mathcal{A} v(\bm{x}) \leq 0,\forall \bm{x} \in \overline{\mathcal{X}}$, we have
    \begin{equation}
        \label{eq:coro2_1}
        \mathbb{E}[v(\widehat{\bm{X}}^{\bm{x}_0}(t,\bm{w}))] \leq v(\bm{x}_0) < -\delta, \forall t \in \mathbb{R}_{\geq 0}.
    \end{equation}

    Also since $v(\bm{x})\geq 1_{\partial\mathcal{X}}(\bm{x}) + \widehat{\mathcal{A}} u(\bm{x}), \forall \bm{x} \in \overline{\mathcal{X}}$, we have that $v(\widehat{\bm{X}}^{\bm{x}_0}(t,\bm{w})) \geq \mathcal{A}u(\widehat{\bm{X}}^{\bm{x}_0}(t,\bm{w})), \forall t \in \mathbb{R}_{\geq 0}$.
    
    Thus, we have that $\int_0^\tau \mathbb{E}[v(\widehat{\bm{X}}^{\bm{x}_0}(t,\bm{w}))]dt \geq \int_0^\tau\mathbb{E}[\mathcal{A}u(\widehat{\bm{X}}^{\bm{x}_0}(t,\bm{w}))]dt, \forall \tau \in \mathbb{R}_{\geq 0}$. 

    According to Lemma \ref{theo:dynkin} and \eqref{eq:coro2_1}, we have that
    \begin{equation*}
    \begin{split}
       \mathbb{E}[u(\widehat{\bm{X}}^{\bm{x}_0}(\tau,\bm{w}))] - u(\bm{x}_0) = & \int_0^\tau\mathbb{E}[\mathcal{A}u(\widehat{\bm{X}}^{\bm{x}_0}(t,\bm{w}))]dt \leq \int_0^\tau \mathbb{E}[v(\widehat{\bm{X}}^{\bm{x}_0}(t,\bm{w}))]dt\\
       \leq & \int_0^\tau v(\bm{x}_0)dt <  -\delta\tau,\quad \forall \tau \in \mathbb{R}_{\geq 0}, 
    \end{split}
    \end{equation*}

    implying that
    \begin{equation}
    \label{eq:coro2_3}
        \frac{\mathbb{E}[u(\widehat{\bm{X}}^{\bm{x}_0}(\tau,\bm{w}))] - u(\bm{x}_0)}{\tau} < -\delta, \forall \tau \in \mathbb{R}_{> 0}.
    \end{equation}
    
    We have that
    $\lim_{\tau\rightarrow\infty}\frac{\mathbb{E}[u(\widehat{\bm{X}}^{\bm{x}_0}(\tau,\bm{w}))] - u(\bm{x}_0)}{\tau} = 0$,
    which contradicts \eqref{eq:coro2_3}. Therefore, we can conclude that if $v(\bm{x})$ satisfies \eqref{eq:lower_continuous}, $v(\bm{x}) \geq 0$ holds for $\bm{x} \in \mathcal{X}$. Thus, constraints \eqref{eq:lower_2007} and \eqref{eq:lower_continuous} are equivalent.
\end{proof}

\section{Optimizations for Computing Lower and Upper Bounds}
\label{sec:appendix_opt}
\newline
\begin{minipage}{0.48\linewidth}
    \begin{equation*}
\begin{split}
&\textbf{Op0}~~~~~~~\max_{v(\bm{x}),\epsilon_1} \epsilon_1\\
&\text{s.t.~}\begin{cases}
  v(\bm{x})\leq 1-\epsilon_1,  \qquad\quad~~\forall \bm{x}\in \mathcal{X}_0\\
       v(\bm{x})\geq \mathbb{E}^{\infty}[v(\bm{\phi}^{\bm{x}}_{\pi}(1))], ~\forall \bm{x}\in \mathbb{R}^n\\
        v(\bm{x})\geq 1,\qquad\quad~~\forall \bm{x}\in \mathbb{R}^n\setminus \mathcal{X}\\
       v(\bm{x})\geq 0, \qquad\qquad\quad~~\forall \bm{x}\in \mathbb{R}^n\\
     \epsilon_1\geq 0
\end{cases}
\end{split}
\end{equation*}
\end{minipage}
\hfill
\begin{minipage}{0.48\linewidth}
    \begin{equation*}
\begin{split}
&\textbf{Op1}~~~~~~~\max_{v(\bm{x}),\epsilon_1} \epsilon_1\\
&\text{s.t.~}\begin{cases}
   v(\bm{x})\leq 1-\epsilon_1, \qquad\quad~~\forall \bm{x}\in \mathcal{X}_0\\
 v(\bm{x})\geq \mathbb{E}^{\infty}[v(\bm{\phi}^{\bm{x}}_{\pi}(1))], ~\forall \bm{x}\in \mathcal{X}\\
v(\bm{x})\geq 1, \qquad\quad~~~ \forall \bm{x}\in \widehat{\mathcal{X}}\setminus \mathcal{X}\\
v(\bm{x})\geq 0, \qquad\qquad\quad~~ \forall \bm{x}\in \widehat{\mathcal{X}}\\
\epsilon_1\geq 0
\end{cases}
\end{split}
\end{equation*}
\end{minipage}

\noindent
\begin{minipage}{0.43\linewidth}
    \begin{equation*}
\begin{split}
&\textbf{Op2}~~~~~~~\min_{v(\bm{x}),\epsilon_2} \epsilon_2\\
&\text{s.t.~}\begin{cases}
            v(\bm{x})\leq \epsilon_2, \qquad\quad~~\forall \bm{x}\in \mathcal{X}_0\\
            \mathbb{E}^{\infty}[v(\bm{\phi}_{\pi}^{\bm{x}}(1))]-v(\bm{x}) \leq -\delta,\\
            \qquad\qquad\qquad\qquad~\forall \bm{x} \in \mathcal{X}\\
            v(\bm{x})\geq 0, \qquad\qquad\,\forall \bm{x}\in \widehat{\mathcal{X}}\\
       \epsilon_2\geq 0
\end{cases}
\end{split}
\end{equation*}
\end{minipage}
\hfill
\begin{minipage}{0.56\linewidth}
    \begin{equation*}
\begin{split}
&\textbf{Op3}~~~~~~~\max_{v(\bm{x}),w(\bm{x}),\epsilon_1} \epsilon_1\\
&\text{s.t.~}\begin{cases}
            v(\bm{x})\leq 1-\epsilon_1, \qquad\qquad\qquad~~~ \forall \bm{x}\in \mathcal{X}_0\\
       v(\bm{x})\geq \mathbb{E}^{\infty}[v(\bm{\phi}^{\bm{x}}_{\pi}(1))],\qquad\quad~~\forall \bm{x}\in \mathcal{X}\\
       v(\bm{x})\geq \mathbb{E}^{\infty}[w(\bm{\phi}^{\bm{x}}_{\pi}(1))]-w(\bm{x}), \forall \bm{x}\in \mathcal{X}\\
       v(\bm{x})\geq 1, \qquad\qquad\qquad\quad\, \forall \bm{x}\in \widehat{\mathcal{X}}\setminus \mathcal{X}\\
       \epsilon_1\geq 0
\end{cases}
\end{split}
\end{equation*}
\end{minipage}

\noindent
\begin{minipage}{0.57\linewidth}
    \begin{equation*}
\begin{split}
&\textbf{Op4}~~~~~~~\min_{v(\bm{x}),w(\bm{x}),\epsilon_2} \epsilon_2\\
&\text{s.t.~}\begin{cases}
      v(\bm{x})\geq 1-\epsilon_2, \qquad\qquad\qquad\quad\, \forall \bm{x}\in \mathcal{X}_0\\
       v(\bm{x})\leq \mathbb{E}^{\infty}[v(\bm{\phi}^{\bm{x}}_{\pi}(1))], \qquad\qquad \forall \bm{x}\in \mathcal{X}\\
       v(\bm{x})\leq \mathbb{E}^{\infty}[w(\bm{\phi}^{\bm{x}}_{\pi}(1))]-w(\bm{x}),~ \forall \bm{x}\in \mathcal{X}\\
       v(\bm{x})\leq 1, \qquad\qquad\qquad\quad~\, \forall \bm{x}\in \widehat{\mathcal{X}}\setminus \mathcal{X}\\
       \epsilon_2\geq 0
\end{cases}
\end{split}
\end{equation*}
\end{minipage}
\hfill
\begin{minipage}{0.38\linewidth}
    \begin{equation*}
\begin{split}
&\textbf{Op5}~~~~~~~\max_{v(\bm{x}),\epsilon_1} \epsilon_1\\
&\text{s.t.~}\begin{cases}
      v(\bm{x}) \leq 1 - \epsilon_1, & \forall \bm{x} \in \mathcal{X}_0\\
       \mathcal{A} v(\bm{x}) \leq 0, &\forall \bm{x} \in \overline{\mathcal{X}} \\
       v(\bm{x}) \geq 0, & \forall \bm{x} \in \overline{\mathcal{X}} \\
       v(\bm{x})\geq 1, & \forall \bm{x} \in \partial \mathcal{X}\\
       \epsilon_1 \geq 0
\end{cases}
\end{split}
\end{equation*}
\end{minipage}

\noindent
\begin{minipage}{0.43\linewidth}
    \begin{equation*}
\begin{split}
&\textbf{Op6}~~~~~~~\max_{v(\bm{x}),u(\bm{x}),\epsilon_1} \epsilon_1\\
&\text{s.t.~}\begin{cases}
      v(\bm{x}) \leq 1 - \epsilon_1, \quad\,\forall \bm{x} \in \mathcal{X}_0 \\
     \mathcal{A} v(\bm{x}) \leq 0, \quad\quad~\,\forall \bm{x} \in \mathcal{X} \\
       v(\bm{x}) \geq \mathcal{A} u(\bm{x}), \quad\forall \bm{x} \in \mathcal{X}\\
       v(\bm{x})\geq 1, \quad\quad~~\forall \bm{x} \in \partial{\mathcal{X}}\\
       \epsilon_1 \geq 0
\end{cases}
\end{split}
\end{equation*}
\end{minipage}
\hfill
\begin{minipage}{0.43\linewidth}
    \begin{equation*}
\begin{split}
&\textbf{Op7}~~~~~~~\min_{v(\bm{x}),u(\bm{x}),\epsilon_2} \epsilon_2\\
&\text{s.t.~}\begin{cases}
      v(\bm{x}) \geq 1 - \epsilon_2, \quad\, \forall \bm{x} \in \mathcal{X}_0 \\
       \mathcal{A} v(\bm{x}) \geq 0, \quad\quad~\, \forall \bm{x} \in \mathcal{X} \\
       v(\bm{x}) \leq \mathcal{A} u(\bm{x}), \quad\forall \bm{x} \in \mathcal{X}& \\
       v(\bm{x})\leq 1, \quad\quad~~\forall \bm{x} \in \partial{\mathcal{X}}\\
       \epsilon_2\geq 0
\end{cases}
\end{split}
\end{equation*}
\end{minipage}

\section{Semi-definite Programming Implementation}
\label{sec:appendix_sdp}
\newline
\begin{minipage}{0.48\linewidth}
\begin{equation*}
\begin{split}
&\textbf{SDP0}~~~~~~~\min_{\epsilon_1,v(\bm{x}),s_i(\bm{x}),i=0,\ldots,1} -\epsilon_1\\
&\text{s.t.~}\\
&\begin{cases}
  1-\epsilon_1-v(\bm{x}) +s_0(\bm{x})g(\bm{x})\in \sum[\bm{x}],\\
       v(\bm{x})-\mathbb{E}^{\infty}[v(\bm{\phi}^{\bm{x}}_{\pi}(1))]\in \sum[\bm{x}],\\
       v(\bm{x})-1-s_1(\bm{x})h(\bm{x})\in \sum[\bm{x}],\\
       v(\bm{x})\in \sum[\bm{x}],\\
       \epsilon_1\geq 0,\\
       s_0(\bm{x})\in \sum[\bm{x}],s_1(\bm{x})\in \sum[\bm{x}].
\end{cases}
\end{split}
\end{equation*}
\end{minipage}
\hfill
\begin{minipage}{0.48\linewidth}
\begin{equation*}
\begin{split}
&\textbf{SDP1}~~~~~~~\min_{\epsilon_1,v(\bm{x}),s_i(\bm{x}),i=0,1} -\epsilon_1\\
&\text{s.t.~}\\
&\begin{cases}
  1-\epsilon_1-v(\bm{x})+s_0(\bm{x})g(\bm{x})\in \sum[\bm{x}],\\
       v(\bm{x})-\mathbb{E}^{\infty}[v(\bm{\phi}^{\bm{x}}_{\pi}(1))]\\
       ~~~~~~~~~~~~~~~~~+s_1(\bm{x})h(\bm{x})\in \sum[\bm{x}],\\
       v(\bm{x})-1+s_2(\bm{x})\widehat{h}(\bm{x})\\
       ~~~~~~~~~~~~~~~~~-s_3(\bm{x})h(\bm{x}) \in \sum[\bm{x}],\\
       v(\bm{x})+s_4(\bm{x})\widehat{h}(\bm{x})\in \sum[\bm{x}],\\
       \epsilon_1\geq 0,\\
       s_i(\bm{x})\in \sum[\bm{x}], i=0,\ldots,4.
\end{cases}
\end{split}
\end{equation*}
\end{minipage}

\noindent
\begin{minipage}{0.472\linewidth}
\begin{equation*}
\begin{split}
&\textbf{SDP2}~\min_{v(\bm{x}),\epsilon_2,s_i(\bm{x}),i=0,\ldots,2} \epsilon_2\\
&\text{s.t.~}\\
&\begin{cases}
        \epsilon_2-v(\bm{x})+s_0(\bm{x})g(\bm{x})\in \sum[\bm{x}],\\
        -\delta-\mathbb{E}^{\infty}[v(\bm{\phi}_{\pi}^{\bm{x}}(1))]+v(\bm{x})\\
        ~~~~~~~~~~~~~+s_1(\bm{x})h(\bm{x})\in \sum[\bm{x}],\\
        v(\bm{x})+s_2(\bm{x})\widehat{h}(\bm{x})\in \sum[\bm{x}],\\
   \epsilon_2\geq 0,\\
   s_i(\bm{x})\in \sum[\bm{x}], i=0,\ldots,2,
\end{cases}
\end{split}
\end{equation*}
where $\delta=10^{-6}$.
\end{minipage}
\hfill
\begin{minipage}{0.472\linewidth}
\begin{equation*}
\begin{split}
&\textbf{SDP3}~\min_{\epsilon_1,v(\bm{x}),w(\bm{x}),s_i(\bm{x}),i=0,\ldots,4} -\epsilon_1\\
&\text{s.t.~}\\
&\begin{cases}
    1-\epsilon_1-v(\bm{x})+s_0(\bm{x})g(\bm{x})\in \sum[\bm{x}],\\
    v(\bm{x})-\mathbb{E}^{\infty}[v(\bm{\phi}^{\bm{x}}_{\pi}(1))]\\
    ~~~~~~~~~~~~~~~~~+s_1(\bm{x})h(\bm{x})\in \sum[\bm{x}],\\
   v(\bm{x})-\mathbb{E}^{\infty}[w(\bm{\phi}^{\bm{x}}_{\pi}(1))]+w(\bm{x})\\
   ~~~~~~~~~~~~~~~~~+s_2(\bm{x})h(\bm{x})\in \sum[\bm{x}], \\
   v(\bm{x})-1+s_3(\bm{x})\widehat{h}(\bm{x})\\
   ~~~~~~~~~~~~~~~~~-s_4(\bm{x})h(\bm{x}) \in \sum[\bm{x}],\\
   \epsilon_1\geq 0,\\
    s_i(\bm{x})\in \sum[\bm{x}], i=0,\ldots,4.
\end{cases}
\end{split}
\end{equation*}
\end{minipage}

\noindent
\begin{minipage}{0.472\linewidth}
\begin{equation*}
\begin{split}
&\textbf{SDP4}~\min_{\epsilon_2,v(\bm{x}),w(\bm{x}),s_i(\bm{x}),i=0,\ldots,4} \epsilon_2\\
&\text{s.t.~}\\
&\begin{cases}
    v(\bm{x})-1+\epsilon_2+s_0(\bm{x})g(\bm{x})\in \sum[\bm{x}],\\
    \mathbb{E}^{\infty}[v(\bm{\phi}^{\bm{x}}_{\pi}(1))]-v(\bm{x})\\
    ~~~~~~~~~~~~~~~~~+s_1(\bm{x})h(\bm{x})\in \sum[\bm{x}],\\
    -v(\bm{x})+\mathbb{E}^{\infty}[w(\bm{\phi}^{\bm{x}}_{\pi}(1))]-w(\bm{x})\\
    ~~~~~~~~~~~~~~~~~+s_2(\bm{x})h(\bm{x})\in \sum[\bm{x}], \\
    1-v(\bm{x})+s_3(\bm{x})\widehat{h}(\bm{x})\\
    ~~~~~~~~~~~~~~~~~-s_4(\bm{x})h(\bm{x}) \in \sum[\bm{x}],\\
    \epsilon_2\geq 0,\\
    s_i(\bm{x})\in \sum[\bm{x}], i=0,\ldots,4.
\end{cases}
\end{split}
\end{equation*}
\end{minipage}
\hfill
\begin{minipage}{0.473\linewidth}
\begin{equation*}
\begin{split}
&\textbf{SDP5}~\min_{\epsilon_1,v(\bm{x}),p(\bm{x}),s_i(\bm{x}),i=0,1,2} -\epsilon_1\\
&\text{s.t.~}\\
&\begin{cases}
  1-\epsilon_1-v(\bm{x})+s_0(\bm{x})g(\bm{x})\in \sum[\bm{x}],\\
  -\mathcal{A} v(\bm{x})+s_1(\bm{x})h(\bm{x})\in \sum[\bm{x}],\\
  v(\bm{x})+s_2(\bm{x})h(\bm{x}) \in \sum[\bm{x}],\\
  v(\bm{x})-1+p(\bm{x})h(\bm{x})\in \sum[\bm{x}],\\
  \epsilon_1\geq 0,\\
  s_i(\bm{x})\in \sum[\bm{x}], i=0,1,2.
\end{cases}
\end{split}
\end{equation*}
\end{minipage}

\noindent
\begin{minipage}{0.473\linewidth}
\begin{equation*}
\begin{split}
&\textbf{SDP6}~\min_{\epsilon_1, v(\bm{x}),u(\bm{x}),p(\bm{x}), s_i(\bm{x}),i=0,1,2} -\epsilon_1\\
&\text{s.t.~}\\
&\begin{cases}
    1-\epsilon_1-v(\bm{x})+s_0(\bm{x})g(\bm{x})\in \sum[\bm{x}],\\
    -\mathcal{A} v(\bm{x})+s_1(\bm{x})h(\bm{x})\in \sum[\bm{x}],\\
    v(\bm{x})-\mathcal{A} u(\bm{x})+s_2(\bm{x})h(\bm{x})\in \sum[\bm{x}], \\
    v(\bm{x})-1+p(\bm{x})h(\bm{x})\in \sum[\bm{x}],\\
    \epsilon_1\geq 0,\\
    s_i(\bm{x})\in \sum[\bm{x}], i=0,1,2.
\end{cases}
\end{split}
\end{equation*}
\end{minipage}
\hfill
\begin{minipage}{0.472\linewidth}
\begin{equation*}
\begin{split}
&\textbf{SDP7}~\min_{\epsilon_2,v(\bm{x}),u(\bm{x}),p(\bm{x}),s_i(\bm{x}),i=0,1,2} \epsilon_2\\
&\text{s.t.~}\\
&\begin{cases}
    v(\bm{x})-1+\epsilon_2+s_0(\bm{x})g(\bm{x})\in \sum[\bm{x}],\\
    \mathcal{A} v(\bm{x})+s_1(\bm{x})h(\bm{x})\in \sum[\bm{x}],\\
    \mathcal{A} u(\bm{x})-v(\bm{x})+s_2(\bm{x})h(\bm{x})\in \sum[\bm{x}], \\
    1-v(\bm{x})+p(\bm{x})h(\bm{x}) \in \sum[\bm{x}],\\
    \epsilon_2\geq 0,\\
    s_i(\bm{x})\in \sum[\bm{x}], i=0,1,2.
\end{cases}
\end{split}
\end{equation*}
\end{minipage}

\end{document}